\documentclass[10pt,journal]{IEEEtran}
\usepackage{cite}
\usepackage{amsmath,amssymb,amsfonts,bbm,bm}
\usepackage{algorithm} 
\usepackage[noend]{algpseudocode} 
\usepackage{graphicx}
\usepackage{float}
\usepackage{subfigure}
\usepackage{textcomp}
\usepackage{xcolor}
\usepackage{float}
\usepackage{paralist}
\usepackage{multirow}
\usepackage{booktabs}
\usepackage[font=small]{caption}
\usepackage{graphicx}
\usepackage{tikz,pgfplots}
\usetikzlibrary{shapes.geometric, arrows, positioning}
\usepackage[nolist]{acronym}
\usepackage{epstopdf}
\usepackage{amsmath}
\usepackage{orcidlink}
\DeclareMathOperator*{\argmin}{argmin}

\newtheorem{definition}{Definition}

\newtheorem{theorem}{Theorem}

\newtheorem{preposition}{Preposition}

\newcommand{\mH}{\mathbf{H}}

\newtheorem{lemma}{Lemma}

\newcommand{\bv}{\mathbf{v}}
\newcommand{\bu}{\mathbf{u}}

\newcommand{\bs}{\mathbf{s}}
\newcommand{\bx}{\mathbf{x}}

\newcommand{\setC}{\mathbb{C}}
\newcommand{\setR}{\mathbb{R}}

\newcommand{\trp}{\mathsf{T}}
\newcommand{\her}{\mathsf{H}}
\newcommand{\set}[1]{\left\lbrace #1 \right\rbrace}

\newcommand{\brc}[1]{\left( #1 \right)}
\newcommand{\dbc}[1]{\left[ #1 \right]}

\newcommand{\abs}[1]{\left\vert #1 \right\vert}

\newcommand{\Ex}[1]{\mathbbm{E} \left\lbrace #1 \right\rbrace}

\def\BibTeX{{\rm B\kern-.05em{\sc i\kern-.025em b}\kern-.08em
    T\kern-.1667em\lower.7ex\hbox{E}\kern-.125emX}}
\begin{document}

\begin{acronym}
\acro{bs}[BS]{base station}
\acro{pa}[PA]{pinching antenna}
\acro{pass}[PASS]{pinching-antenna system}
\acro{mimo}[MIMO]{multiple-input multiple-output}
\acro{fl}[FL]{federated learning}
\acro{los}[LoS]{line-of-sight}
\acro{fdma}[FDMA]{Frequency Division Multiple Access}
\acro{tdma}[TDMA]{Time Division Multiple Access}
\acro{ps}[PS]{parameter server}
\acro{oma}[OMA]{orthogonal multiple access}
\acro{noma}[NOMA]{non-orthogonal multiple access}
\acro{irs}[IRS]{reconfigurable intelligent surface}
\acro{snr}[SNR]{signal to noise ratio}
\acro{sinr}[SINR]{signal to interference and noise ratio}
\acro{bcd}[BCD]{block coordinate descent}
\acro{rzf}[RZF]{regularized zero-forcing}
\acro{pas}[PAS]{pinching antenna system }
\acro{fp}[FP]{fractional programming}
\acro{mrt}[MRT]{maximal-ratio transmission}
\acro{zf}[ZF]{zero-forcing}
\acro{mse}[MSE]{mean square error}
\end{acronym}

\title{ Pinching Antennas-Assisted Low-Latency Federated Learning  Over Multi-User Wireless Networks}
\author{
    Saba Asaad$^{*}$\orcidlink{0000-0002-7959-8329},
    Hina~Tabassum\orcidlink{0000-0002-7379-6949},~\IEEEmembership{Senior~Member,~IEEE,} and  Ping Wang\orcidlink{0000-0002-7379-6949},~\IEEEmembership{Fellow,~IEEE,}

\thanks{
        S. Asaad, H. Tabassum, and P. Wang are with the Department of Electrical Engineering and Computer Science, York University, Toronto, Canada (e-mails:  
        \{\href{hinat@yorku.ca}{hinat}, \href{pingw@yorku.ca}{pingw}\}@yorku.ca). 
        
	}%

}

\maketitle
\begin{abstract}
	Federated learning (FL) over wireless networks is fundamentally constrained by unreliable communication links, particularly when uplink channels suffer from blockage, fading, or weak line-of-sight (LoS) conditions. Pinching-antenna systems (PASSs) offer a new physical-layer capability to dynamically reposition radiating points along a dielectric waveguide, enabling controllable LoS connectivity and significantly improved channel quality. This paper develops FedPASS, a novel framework for low-latency wireless FL assisted by PASS. 
	We formulate a multi-objective optimization problem that jointly minimizes the end-to-end round latency and an upper bound on the FL optimality gap. The resulting formulation is a mixed-integer nonlinear program subject to practical constraints on scheduling, transmit power, local CPU frequency, and PA placement. To address the resulting computational challenges, we develop a two-tier iterative algorithm: an outer loop that updates  scheduling, communication time allocation, and power control via block coordinate descent, and an inner loop that optimizes PA locations using a Gauss–Seidel–based coordinate update with grid search under spacing constraints. Numerical results on MNIST and CIFAR-10 demonstrate that FedPASS achieves accuracy comparable to idealized FL baselines while drastically reducing the total training latency compared to conventional wireless FL.
\end{abstract}
\begin{IEEEkeywords}
    Federated learning, pinching antennas, multi-objective optimization
\end{IEEEkeywords}
\raggedbottom
\section{Introduction}
The performance of wireless networks has been fundamentally constrained by the characteristics of the propagation channel, which were long regarded beyond human control. However, recent developments in wireless communications have challenged this paradigm by introducing mechanisms that enable dynamic reconfiguration of channel properties. In particular, flexible and reconfigurable antenna architectures have opened new possibilities for manipulating the propagation environment \cite{basharat2021reconfigurable}. Technologies such as reconfigurable intelligent surfaces (RISs) \cite{bereyhi2023channel}, fluid antennas \cite{wong2020fluid}, and movable antenna systems \cite{zhu2023movable} provide substantial control over the end-to-end channel by tuning their physical configurations. Specifically, RISs employ programmable phase shifts to reshape reflected signal paths \cite{bereyhi2022should}, while fluid-antenna and movable-antenna systems adjust the position of radiating elements within a confined region to realize favorable propagation \cite{renzo2019smart, wu2021intelligent}. 

Despite their advantages, both fluid and movable antenna designs restrict element displacement to only a few wavelengths, offering limited influence on large-scale path loss. As a result, when a direct line-of-sight (LoS) link is blocked, such limited repositioning range is often insufficient to restore the LoS component. This issue becomes more pronounced at higher carrier frequencies, where the shorter wavelengths further reduce the effective repositioning. In this context, the emerging pinching-antenna technology provides a solution with the objective of preserving strong LoS connectivity between transmitters and receivers \cite{ding2025flexible, suzuki2022pinching}. 

A \ac{pass} employs a long dielectric waveguide that can span tens of meters, enabling flexible placement of \acp{pa} without requiring dedicated RF chains or supplementary hardware.
Each \ac{pa} is formed by attaching a small dielectric element to the waveguide, thereby activating that location as a controlled radiation point. By selectively activating different points along the waveguide, \ac{pass} enables  reconfiguration of the propagation environment with high precision. This allows PASS to maintain reliable LoS connectivity even in scenarios where conventional antenna-based solutions fail. Furthermore, the ability to flexibly tune both the number and placement of active PAs, combined with a low-cost modular structure that enables effortless antenna addition, removal, or repositioning, makes PASS a highly versatile and cost-effective solution for next-generation wireless edge learning, distributed learning and computing solutions \cite{wang2025pinching, liu2025pinching}.

\par Wireless edge learning is inherently constrained by the  latency of uplink communications. In \ac{fl}, multiple devices collaboratively train a shared model without exchanging raw data \cite{konevcny2016federated, bereyhi2023device}. Each training round involves (i) uplink model aggregation, where devices transmit local updates to the server for weighted averaging, and (ii) downlink broadcasting of the updated global model \cite{hamidi2025rate}. By communicating model parameters rather than raw datasets, \ac{fl} preserves privacy and significantly reduces backhaul load. However, the frequent uplink and downlink exchange inherent in FL renders its performance highly sensitive to the stochastic nature of wireless channels. In particular, unreliable or weak uplink channels can delay aggregation, induce stragglers, and ultimately slow convergence and degrade model accuracy. PASS enables dynamic repositioning of radiation points closer to scheduled devices, thereby compensating for large-scale path-loss and stabilizing uplink transmission quality. This reduces latency of FL per round, and expands the set of devices that can reliably contribute to model aggregation, influencing both communication efficiency and statistical convergence.

In this work, we analyze the feasibility and significance of FedPASS, a  \ac{pass}-enabled FL framework, with the objective of jointly maximizing learning performance and minimizing communication latency, compared to conventional~approaches.




\subsection{Background Work}
\subsubsection{Related Work on PASS}
Early investigations established the fundamental signal behavior of PASS, examining how dielectric-waveguide propagation and localized PA activation work together to shape the radiation mechanism \cite{ding2024flexible}. 
Subsequent studies have demonstrated that PASS achieves superior performance compared to conventional fixed-antenna systems and current reconfigurable-antenna architectures.  For instance, by optimizing the number and spatial spacing of PAs along the dielectric waveguide, PASS has been shown to achieve significantly higher array gains than traditional antenna array \cite{ouyang2025array}. In addition, \ac{noma}-assisted PASS configurations with dynamically activated PAs have been shown to deliver markedly higher sum rates than their fixed-antenna counterparts \cite{wang2025antenna}. Beyond rate enhancement, physics-based modeling and joint transmit–pinching beamforming strategies developed in \cite{wang2025modeling} reveal that PASS can reduce the required transmit power significantly relative to conventional massive \ac{mimo} systems while maintaining comparable performance. In \cite{asaad2025dynamic}, the authors investigated the energy efficiency of PASS in a point-to-point communication setting and showed that dynamic tuning of pinching locations can significantly boost system efficiency. \color{black} PASS has also demonstrated notable advantages in secure communications, consistently outperforming fixed-location antenna architectures \cite{zhu2025pinching, wang2025pinching, badarneh2025physical}.
  
\par In addition to modeling and performance analysis, PA position optimization has become a central research focus in PASS. Several low-complexity or data-driven optimization strategies have been proposed in the literature. For example, probability-learning–based algorithms for PA placement were introduced in \cite{chen2025dynamic,asaad2025dynamic}, while a two-stage position adjustment approach was developed in \cite{xu2025rate} to iteratively refine PA locations. In addition, \cite{xie2025low} proposed a low-complexity placement design to maximize the downlink sum-rate for multiple users. More recently, a graph neural network (GNN)-enabled joint optimization framework  for PA placement and power allocation was explored in \cite{xie2025graph}, demonstrating the potential of learning-based methods in enhancing scalability.


\subsubsection{FL over \ac{pass}}
The integration of PASS into \ac{fl} has only recently been explored in a handful of research studies. In \cite{lin2025pinching}, the authors demonstrated that PASS can fundamentally alleviate the wireless straggler bottleneck in FL by dynamically shortening worst-case uplink distances. They analyzed synchronous and asynchronous FL and showed that PASS mitigates stragglers, reduces latency tails, and improves convergence. This work primarily focused on straggler mitigation. In \cite{wu2026straggler}, a hybrid conventional and pinching antenna network was developed for straggler-resilient FL, where fuzzy-logic client classification and DRL-based optimization were used to minimize latency under NOMA transmission. In \cite{overtheAir2026}, PASS was investigated for analog over-the-air FL, where PASS placement, scheduling, and power scaling were optimized under a computation-rate framework for AirComp-based aggregation. This work fundamentally targets analog aggregation efficiency and does not address the interaction between PA placement and the latency–learning tradeoff in digital FL. 
\color{black}
\subsection{Contributions}
Despite recent advances, a unified optimization framework that explicitly couples PA placement with the learning dynamics of digital \ac{fl} remains largely unexplored. Existing studies primarily address either physical-layer straggler mitigation or analog over-the-air aggregation, without jointly characterizing how antenna reconfiguration impacts both training latency and statistical convergence performance.  As a result, the interplay between communication design and learning performance in PASS-enabled digital FL is not yet fully understood.
To the best of our knowledge, although prior works have investigated PASS-assisted federated learning, none has developed a unified optimization framework that explicitly captures the latency–learning tradeoff through joint PA placement, device scheduling, and wireless resource control. Accordingly, the contributions of this article are summarized as follows:

$\bullet$  We propose a novel framework, namely FedPASS, that integrates \ac{pass} into wireless \ac{fl} to enhance uplink reliability and aggregation quality. By optimizing PA placement to strengthen uplink channels, FedPASS enables more effective device participation and reduces aggregation error, thereby accelerating FL convergence. However, increased participation also raises communication latency, necessitating a principled latency–learning tradeoff design.

$\bullet$ We formalize the inherent latency–learning tradeoff as a multi-objective optimization problem that jointly minimizes communication–computation latency in each FL round and a {learning convergence} metric referred to as \textit{optimality gap}. Particularly, this metric quantifies the amount of training data excluded in each FL round and captures how partial device participation affects convergence behavior of FL. 
\color{black} The resulting formulation is a mixed-integer nonlinear program (MINLP) involving discrete scheduling and PA-position variables coupled with continuous power-control decisions. 
Solving this problem characterizes the Pareto frontier of achievable latency–learning tradeoffs in PASS-enabled FL systems.

$\bullet$   We develop a  two-tier iterative algorithm tailored to the MINLP structure. The outer tier employs \ac{bcd} to optimize scheduling, communication time allocation, and power control through a sequence of convex subproblems. The inner tier addresses the non-convex PA-placement problem using a Gauss–Seidel coordinate-update method that enforces minimum antenna-spacing constraints. This hierarchical decomposition yields a scalable and computationally efficient approach for solving PASS-assisted FL optimization. We further provide computational complexity characterization of the proposed algorithm.

$\bullet$ Through extensive experiments on MNIST and CIFAR-10 datasets, we demonstrate that the proposed FedPASS framework achieves a final test accuracy closely matching the Perfect FL benchmark and consistently attains higher accuracy across communication rounds compared to the conventional fixed-antenna scheme. Moreover, FedPASS reduces training latency by up to {$6.4\times$} relative to conventional wireless FL schemes. 

\subsection{Paper Organization and Notations}
The rest of this paper is organized as follows: In Section II, we present the system model and assumptions. Section III  provides a formulation to the design problem and casts it as a multi-objective optimization. We develop a two-tier iterative algorithm to solve the design problem in Section IV. Section V provides numerical validation of the proposed scheme followed by the conclusion in Section VI. 

Notation: Scalars, vectors, and matrices are denoted by non-bold, bold lower-case, and bold upper-case letters, respectively. The transpose, conjugate and conjugate transpose of $\mH$ are denoted by $\mH^{\trp}$, $\mH^*$ and $\mH^{\her}$, respectively. The sets $\setR$ and $\setC$ are the real axis and the complex plane, respectively. $\mathcal{CN}\brc{\eta,\sigma^2}$ is the complex Gaussian distribution with mean $\eta$ and variance $\sigma^2$. $\dagger$ denotes the pseudo-inverse. $\Ex{.}{}$ denotes the expectation. $\{1,\ldots,N\}$ is abbreviated as $[N]$ and~$\nabla$ denotes the gradient operator.
\section{System Model and Assumptions}
\label{sec:system_model}
As illustrated in Fig. \eqref{Figure: Schematic}, we consider a wireless network comprising a \ac{ps} equipped with a \ac{pass} that serves $K$ single-antenna edge devices. This work considers an \ac{oma}-based transmission strategy. Specifically, we adopt \ac{tdma}, where the $K$ users are served in separate time slots. We assume that the devices are deployed in a rectangular area on the $x$-$y$ plane with dimensions $D_x \times D_y$. The position of the $k$-th device is denoted by $\psi_k = [x_k^{\text{U}}, y_k^{\text{U}}, 0]$, where $0 \leq x_k^{\text{U}} \leq D_x$ and $-D_y/2 \leq y_k^{\text{U}} \leq D_y/2$. The \ac{ps} is equipped with $N$ active \acp{pa} mounted on a waveguide aligned parallel to the $x$-axis at height $d$, with the $n$-th \ac{pa} at $\phi_n = [x_n^{\rm P}, 0, d]$, where $x_n^{\rm P} \in [0, D_x]$ denotes the location of the \ac{pa} on the waveguide. The PS feed-point is at $\phi_0 = [0, 0, d]$.  
\begin{figure}[t]
\centering
\includegraphics[height=0.3\textwidth]{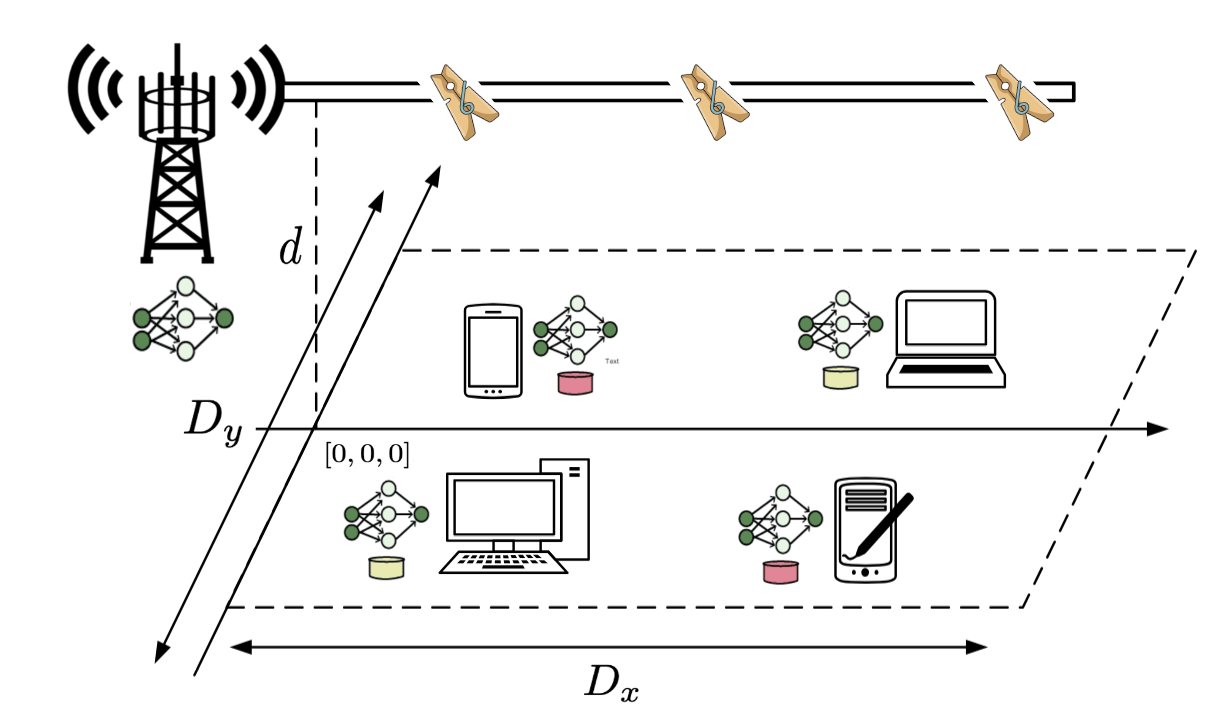}
\caption{Graphical illustration of the considered system model for FL over pinching antenna system.}
\label{Figure: Schematic}
\end{figure}
\subsection{Over-the-Air FL Setting}
The devices aim to train a global model on their distributed data using the \ac{fl} framework. More precisely, we consider a supervised learning setting, in which each device indexed by $k$ possesses a local training dataset of size $|\mathcal{D}_k|$, represented as $\mathcal{D}_k = \{ (\mathbf{u}_{k,i}, v_{k,i}) : 1 \leq i \leq |\mathcal{D}_k| \}$, where $\mathbf{u}_{k,i} \in \setR^D$ is the $i$-th data sample, and $v_{k,i}\in \setR$ is its corresponding label. The goal of the system is to collaboratively train a global model $G_{\boldsymbol{\omega}}: \setR^D \mapsto \setR$ over the entire dataset, i.e.,
$
    \mathcal{D} = \mathcal{D}_1 \cup \ldots \cup \mathcal{D}_K,
$
without sharing the local datasets over the network. 

For this model, the local empirical loss that can be computed on $\mathcal{D}_k$ at device $k$ is given by:
\begin{equation}
F_k(\boldsymbol{\omega} \vert \mathcal{D}_k) = \frac{1}{|\mathcal{D}_k|} \sum_{ (\mathbf{u}_k, v_k) \in \mathcal{D}_k } \ell (G_{\boldsymbol{\omega}} (\mathbf{u}_{k}), v_{k}),
\end{equation}
where $\ell(\cdot)$ denotes the loss function used for training, e.g., cross-entropy. The global loss function is then defined as:
\begin{equation}
F(\boldsymbol{\omega} \vert \mathcal{D}) = \sum_{k=1}^{K} \frac{|\mathcal{D}_k|}{|\mathcal{D}|}  F_k(\boldsymbol{\omega} \vert \mathcal{D}_k).
\end{equation}
The global training on $\mathcal{D}$ can then be formulated as 
\begin{align}\label{objective}
    \min_{\omega} F(\boldsymbol{\omega} \vert \mathcal{D}),
\end{align}
which clearly needs the knowledge of local datasets at the \ac{ps}.

\ac{fl} uses an iterative approach to solve the global training problem in \eqref{objective}. Each iteration of the algorithm is referred to as a \textit{communication round}. In the $t$-th communication round, the following steps are performed:
\begin{enumerate}
    \item \textit{Global Model Broadcasting:} At the beginning of the $t$-th communication round, the PS broadcasts the global model parameter vector $\boldsymbol{\omega}_{t-1}$ to all participating  devices.
    \item \textit{Local Model Computation:} After receiving the global model $\boldsymbol{\omega}_{t-1}$, each device performs $\vartheta$ steps of a gradient-based optimization algorithm, e.g., stochastic gradient descent (SGD), to update its local model using its local dataset. Specifically, device $k$ initializes its local parameter as $\boldsymbol{\omega}_k^{(0)} = \boldsymbol{\omega}_{t-1}$, and iteratively updates it according to $\boldsymbol{\omega}_k^{(j+1)} = \boldsymbol{\omega}_k^{(j)} - \zeta \nabla F_k\left( \boldsymbol{\omega}_k^{(j)} \mid \mathcal{D}_k \right)$ for $j = 0, 1, \ldots, \vartheta - 1$ where $\zeta$ is the learning rate. After completing the $\vartheta$ local update steps, the local model is set to $\boldsymbol{\omega}_{k,t} = \boldsymbol{\omega}_k^{(\vartheta)}$.
     \item \textit{Device Scheduling and Uplink Transmission:} The \ac{ps} performs scheduling and selects a subset of devices to participate in each communication round. Let $\mathcal{S} \subseteq [K]$ denote the subset of devices that are scheduled to share their model in round $t$. We can represent this set using a \textit{scheduling mask} variable: for each device $k \in \mathcal{S}$, the scheduling mask $s_k^{t} \in \{0, 1\} $ represents its activity in iteration $t$, where $ s_k^{t} = 1 $ indicates that the device is scheduled for the $t$-th round, and $s_k^{t} = 0$ otherwise. The selected devices upload their local model parameters $ \boldsymbol{\omega}_{k, t}$ to the \ac{ps} over the wireless channel. After receiving the local models from all scheduled devices, the \ac{ps} aggregates them to update the global model as:
\begin{align}
\boldsymbol{\omega}_{t} = \frac{1}{D(\bs^t)} \sum_{k=1}^{K} s_k^{t} |\mathcal{D}_k| \boldsymbol{\omega}_{k,t},
\end{align}
where $D(\bs^t) = \sum_{k=1}^{K} s_k^{t} | \mathcal{D}_k|$ for $\bs^t = [s_1^t, \ldots, s_K^t]$.
\end{enumerate}

\subsection{Communication Model}
The channel between the $k$-th device and the PS consists of two segments: a free-space channel from the device to the \acp{pa} and a guided channel along the waveguide to the \ac{ps}. The free-space channel for device $k$ can be represented by a linear model with channel vector $\mathbf{h}_{k} (\bx)\in\setC^N$ being given by:
\begin{equation} \label{eq:freespacechannel}
\mathbf{h}_{k} (\bx) = \sqrt{\eta}
\left[ \frac{ e^{-j \frac{2\pi}{\lambda} \|\psi_k - \phi_1\| }}{\|\psi_k - \phi_1\|}, \ldots, \frac{e^{-j \frac{2\pi}{\lambda} \|\psi_k - \phi_N \|}}{\|\psi_k - \phi_N \|} \right]^\trp ,
\end{equation}
which is a function of the pinching antenna positions  $\bx=[x_1^{\rm P}, \cdots, x_N^{\rm P}]$\color{black}, $ \eta = c^2 / (16 \pi^2 f_c^2)$ with $c$ and $f_c$ being the speed of light and carrier frequency, respectively, and $\lambda = c/f_c$ is the wavelength. The term $\|\psi_k - \phi_n \|$ denotes the Euclidean distance between device $k$ and \ac{pa} $n$. The guided channel from \acp{pa} to the \ac{ps} feed-point is further given by:
\begin{equation}\label{waveguidchannel}
\mathbf{g} (\bx) = \left[ e^{-j \frac{2\pi}{\lambda_g} \|\phi_0 - \phi_1\|}, \ldots, e^{-j \frac{2\pi}{\lambda_g} \|\phi_0 - \phi_N \|} \right]^\trp ,
\end{equation}
where $\lambda_g\!=\!\lambda / n_\mathrm{neff}$ is the g\textit{uided wavelength} with $n_\mathrm{neff}$ denoting the effective refractive index of the dielectric waveguide. 

Given \eqref{eq:freespacechannel} and \eqref{waveguidchannel}, 
the effective uplink channel from device $k$ to PS through \ac{pass} can be compactly represented as
\begin{equation}
H_k(\bx) = \mathbf{g}^\trp(\bx)\mathbf{h}_k(\bx).
\end{equation}
Following standard PA-assisted transmission models \cite{ding2025flexible}, we adopt a TDMA protocol for uplink, as illustrated in Fig.~\ref{Figure: Sum_Rate_User}. Under this protocol, each communication round is divided into orthogonal time slots allocated to the scheduled devices.

\section{Evaluation Metrics and Problem Formulation}
In this section, we present the end-to-end training latency and energy consumption models for the considered PASS-assisted FL system, as well as the metric for measuring the learning performance in the \ac{fl} setting. Invoking these models, we formulate the multi-objective optimization problem. 
\subsection{End-to-End Training Latency Model}
\begin{figure}[t]
\centering
\includegraphics[height=0.3\textwidth]{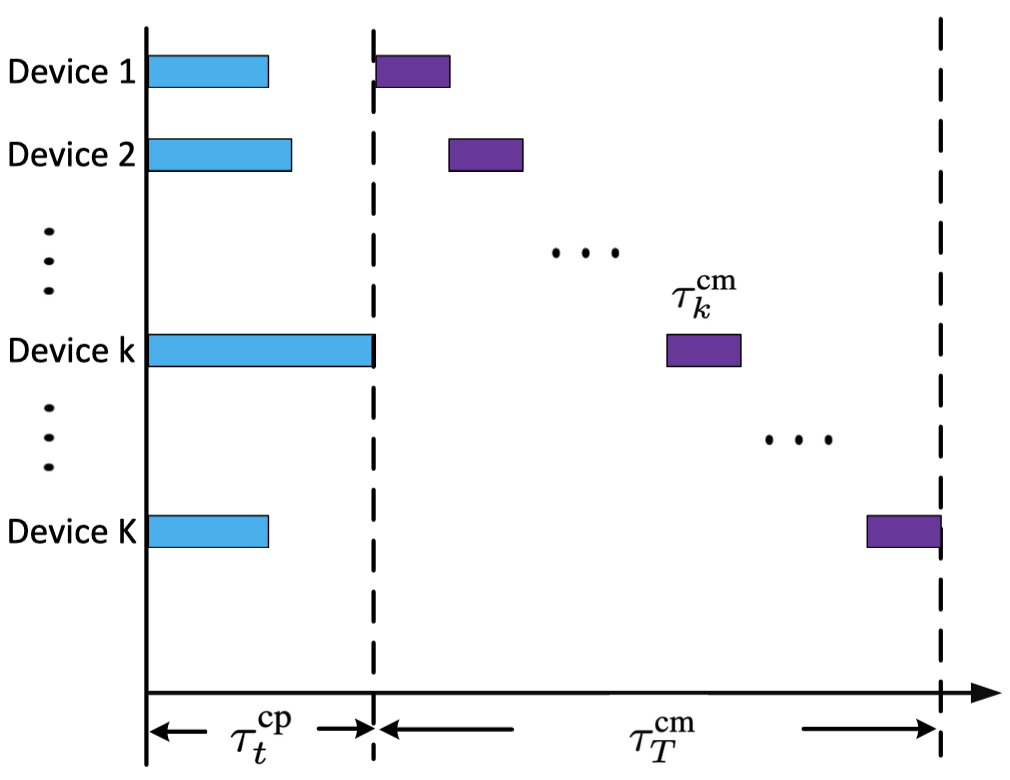}
\caption{Time line of FL-assisted PASS based on TDMA.}
\label{Figure: Sum_Rate_User}
\end{figure}
In a typical \ac{fl} setup, each communication round involves multiple steps carried out by both edge devices and the \ac{ps}. The total latency of a round, defined as the time required to complete one full cycle of training and communication, is hence influenced by two primary factors: the time required for local computation at each device, and the time required for model exchange over the wireless channel. Consequently, we decompose the total latency into two fundamental components: ($i$) \textit{computation latency}, which captures the time consumed for local training at the edge devices, ($ii$) \textit{communication latency}, which accounts for the time required to exchange model parameters between the \ac{ps} and the devices. 
Due to the high transmit power at the server, the downlink transmission delay is negligible compared to the uplink transmission time. Therefore, in our analysis, we focus primarily on the latency associated with local model computation and uplink transmission of model parameters \cite{yang2020energy, zhang2024joint}.\color{black}
\subsubsection{Local Computation Latency} Let \( C_k \) denote the number of CPU cycles required to process a single data sample on device \( k \), which can be estimated beforehand through offline profiling. The total computational workload for a single local training round on device \( k \) amounts to \( C_k \vert\mathcal{D}_k\vert \) CPU cycles. If the processing frequency of device \( k \) is given by \( f_k \), then, the corresponding computation time of device $k$ is given by:
\begin{equation}
    \tau_k^{\text{cp}} = \frac{C_k \vert\mathcal{D}_k\vert}{f_k}. 
\end{equation}
Assuming synchronous training across all participating devices, which is consistent with common practices in \ac{fl} (e.g., \cite{bouzinis2023wireless, amiri2020federated}), the overall computation latency in the \( t \)-th communication round is determined by the slowest device among those scheduled to participate. More precisely, the computation latency of communication round $t$ is given by:
\begin{equation}
    \tau_t^{\text{cp}} = \max_{k \in \mathcal{S}} \{s_k^t \tau_k^{\text{cp}}\},
\end{equation}
where  \( s_k^t \in \{0, 1\} \) is the scheduling mask of device \( k \) in the communication round \( t \).
\subsubsection{Communication latency}Since the local model has a fixed dimension across devices, the data size that each device must upload remains constant and is denoted by \( D_b \) bits. To ensure successful transmission of the local model during the communication phase, device $k$ requires to meet
\begin{align}
    \dfrac{D_b s_k^t}{R_k}\leq \tau_k^\text{cm}, 
\end{align}
where $\tau_k^\text{cm}$ denotes the communication time allocated to device $k$, and $R_k$ represents its achievable data rate, given by
\begin{equation}
R_k  = 
B \log_2 \left( 1 + \frac{P_k \eta }{N \sigma^2} G_k (\bx) \right),
\end{equation}
where $B$ denotes the signal bandwidth, $P_k$ is the transmit power of device $k$, and $\sigma^2$ is the noise variance at the PS. Moreover,
\begin{align}
    \hspace{-0.2cm}\!\!G_k (\bx)\!=\!\vert H_k(\bx) \vert^2\!=\!\left| \sum_{n=1}^N \frac{e^{-j \left( \frac{2\pi}{\lambda} \|\psi_k - \phi_n\| + \frac{2\pi}{\lambda_g} \|\phi_0 - \phi_n \| \right)}}{\|\psi_k - \phi_n \|} \right|^2\!\!\!,\label{gequation}
\end{align}
being the channel gain from device $k$ to the PS through the pinching antenna system.
As a result, the total communication time across all selected devices is given by
$
    \tau^\text{cm}_t = \sum_{k=1}^{K} s_k^t\tau_k^\text{cm},
$
which models the total communication latency in iteration $t$.
The end-to-end latency in the $t$-th communication round  can thus be given by:
\begin{align}\label{tau}
    \tau_t  &=  \tau^\text{cm}_t + \tau_t^{\text{cp}} =\sum_{k=1}^{K} s_k^t\tau_k^\text{cm} + \tau_t^{\text{cp}}.
\end{align}


\subsection{Energy Consumption Model}
Similar to latency, the total energy consumed at each device comprises two main components, i.e., (i) the energy required for local computation, and (ii) the energy spent for uploading the local model over the wireless network. 
\subsubsection{Energy for Local Computation} Following \cite{yang2020energy}, the energy consumed for local model training on device \( k \) can be modeled as follows:
\begin{equation}
    E_k^{\text{cp}} = \xi f_k^3 \tau_k^{\text{cp}} = \xi C_k | \mathcal{D}_k | f_k^2, 
\end{equation}
where \( \xi \) is a device-dependent constant reflecting hardware-specific characteristics. For sake of simplicity, in the sequel, we consider the same \( \xi \) across the local devices. 

\subsubsection{Energy for Local Model Uploading}
The energy consumption of device \( k \) for uploading its local model to the \ac{ps} in each round is given by
$
    E^{\rm cm}_{k}= P_k \tau_k^\text{cm},
$
with $\tau_k^\text{cm}$ being the communication time allocated to~device~$k$.

The total energy consumed in each communication round by the local device $k$ can thus be given by:
\begin{align}
    E_k &= E^{\rm cm}_{k} + 
     E_k^{\text{cp}}
     = P_k \tau_k^\text{cm} + \xi C_k |\mathcal{D}_k| f_k^2. 
\end{align}
\subsection{Learning Performance: Optimality Gap}
To evaluate the learning performance of the distributed FL setting, we use the notion of \textit{optimality gap}, which is standard in convex optimization \cite{zheng2023semi, cao2021optimized}. In our setting, the optimality gap is defined as
$
O_t = F(\boldsymbol{\omega}_t) - F(\boldsymbol{\omega}^\star),
$
where $F(\boldsymbol{\omega}_t)$ denotes the global loss after $t$ communication rounds,\footnote{For the sake of brevity, we drop $\mathcal{D}$ in the sequel.} and $\boldsymbol{\omega}^\star \in \arg\min_{\omega} F(\boldsymbol{\omega})$ represents a global minimizer of the empirical loss.
\color{black}
The characterization of the optimality gap depends on the underlying loss function and model architecture. In the considered PASS-assisted FL system, communication quality directly affects device scheduling and, consequently, convergence behavior. To obtain a tractable performance metric that captures this interaction, we derive an upper bound on the optimality gap under standard smoothness and gradient assumptions, following the analytical framework in \cite{xing2021federated, guo2022joint}.
\subsubsection{Bounding Optimality Gap}
To establish the upper bound, we impose the following commonly adopted assumptions.
\begin{itemize}
    \item[A.1] The local loss functions \( F_k(\boldsymbol{\omega} \vert \mathcal{D}_k) \) are continuously differentiable, i.e., \( \nabla F_k(\boldsymbol{\omega}) \) exists and is continuous. 
    \item[A.2] The gradient of the global loss \( F(\boldsymbol{\omega} \vert \mathcal{D}) \) is Lipschitz continuous with constant \( L \), i.e., 
\begin{align}
   \|\nabla F(\boldsymbol{\omega}_1) - \nabla F(\boldsymbol{\omega}_2)\|_2 \leq L \|\boldsymbol{\omega}_1 - \boldsymbol{\omega}_2\|_2, 
\end{align}
for any $\boldsymbol{\omega}_1 \neq \boldsymbol{\omega}_2$.
\item[A.3] The loss satisfies the Polyak-Łojasiewicz condition with parameter \( \delta \), which indicates 
\begin{align}
\|\nabla F(\boldsymbol{\omega})\|_2^2 \geq 2 \delta (F(\boldsymbol{\omega}) - F(\boldsymbol{\omega}^\star)),    
\end{align}
at any point $\boldsymbol{\omega}$. 
\item[A.4] The sample local gradients are bounded in norm with some positive constant \( \varepsilon \), i.e., 
\begin{align}
    \Vert\nabla \ell (G_{\boldsymbol{\omega}} (\mathbf{u}_{k}), v_{k}) \Vert_2 \leq \varepsilon,
\end{align}
for any $(\mathbf{u}_{k}, v_{k}) \in \mathcal{D}_k$.
\end{itemize}

Under the above assumptions, the decrease in the global empirical loss after each communication round can be bounded, as stated in the following lemma.
\begin{lemma}\label{lem:1}
    Let assumptions (A.1)-(A.4) hold. For the learning rate \( \zeta = 1/L \), the global loss in the communication round $t$  satisfies    
\begin{align} \label{lemma1}
F(\boldsymbol{\omega}_{t}) \leq &F(\boldsymbol{\omega}_{t-1}) - \frac{\vartheta}{2L} \|\nabla F(\boldsymbol{\omega}_{t-1})\|_2^2 + \frac{\vartheta}{2L} \|\mathbf{e}_{t-1}\|_2^2 \nonumber \\
&+ \frac{\vartheta^3 \varepsilon^2}{2 L|\mathcal{D}| D(\bs^t)},
\end{align}
where $\mathbf{e}_{t-1}$ is defined as
\begin{align}\label{eq:e}
\mathbf{e}_{t-1} = \nabla F(\boldsymbol{\omega}_{t-1}) - \frac{1}{D(\bs^t)} {\sum_{k=1}^K s_k^t |\mathcal{D}_k| \nabla F_k(\boldsymbol{\omega}_{t-1})},
\end{align}
and defines the aggregation error induced by partial device scheduling.
\end{lemma}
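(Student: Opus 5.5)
We follow the standard descent-lemma route (as in the analyses of partial-participation FL that the paper cites). Write the aggregated update $\boldsymbol{\omega}_t-\boldsymbol{\omega}_{t-1}$ as $-\zeta$ times an effective gradient direction $\mathbf{d}_{t-1}$. Then apply the $L$-smoothness consequence of (A.2),
\begin{align*}
F(\boldsymbol{\omega}_t)\le F(\boldsymbol{\omega}_{t-1})+\nabla F(\boldsymbol{\omega}_{t-1})^\trp(\boldsymbol{\omega}_t-\boldsymbol{\omega}_{t-1})+\frac{L}{2}\|\boldsymbol{\omega}_t-\boldsymbol{\omega}_{t-1}\|_2^2 .
\end{align*}
With $\zeta=1/L$ and an update of the form $-\frac{1}{L}\mathbf{d}$, the right-hand side is
\begin{align*}
F(\boldsymbol{\omega}_{t-1})-\frac{1}{2L}\|\nabla F(\boldsymbol{\omega}_{t-1})\|_2^2+\frac{1}{2L}\|\nabla F(\boldsymbol{\omega}_{t-1})-\mathbf{d}\|_2^2 .
\end{align*}
This follows by completing the square: $-\mathbf{a}^\trp\mathbf{d}+\frac12\|\mathbf{d}\|^2=\frac12\|\mathbf{a}-\mathbf{d}\|^2-\frac12\|\mathbf{a}\|^2$. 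This identity produces the $-\frac{1}{2L}\|\nabla F\|^2$ and $+\frac{1}{2L}\|\cdot\|^2$ structure of \eqref{lemma1}.

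Next, decompose the aggregated direction over the $\vartheta$ local steps. By the local update rule and the aggregation formula, $\boldsymbol{\omega}_t-\boldsymbol{\omega}_{t-1}=-\frac{1}{L}\sum_{j=0}^{\vartheta-1}\frac{1}{D(\bs^t)}\sum_k s_k^t|\mathcal{D}_k|\nabla F_k(\boldsymbol{\omega}_k^{(j)})$. We split each inner gradient as $\nabla F_k(\boldsymbol{\omega}_{t-1})+[\nabla F_k(\boldsymbol{\omega}_k^{(j)})-\nabla F_k(\boldsymbol{\omega}_{t-1})]$. The first part gives, per local step, exactly $\nabla F(\boldsymbol{\omega}_{t-1})-\mathbf{e}_{t-1}$ by \eqref{eq:e}. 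The plan is to treat the $\vartheta$ local steps as $\vartheta$ successive descent-type steps anchored at $\boldsymbol{\omega}_{t-1}$. Equivalently, apply the completed-square bound with $\mathbf{d}$ normalized per step and sum. This yields the factor $\vartheta$ multiplying $-\frac{1}{2L}\|\nabla F(\boldsymbol{\omega}_{t-1})\|^2$ and $+\frac{1}{2L}\|\mathbf{e}_{t-1}\|^2$. We will use $\|\mathbf{a}+\mathbf{b}\|^2$ splitting into the $\mathbf{e}_{t-1}$ term and a drift term.

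The remaining drift term must be controlled by (A.4). Each local iterate moves by at most $\zeta\varepsilon=\varepsilon/L$ per step, since $\|\nabla F_k\|\le\varepsilon$ as an average of per-sample gradients. Hence $\|\boldsymbol{\omega}_k^{(j)}-\boldsymbol{\omega}_{t-1}\|\le j\varepsilon/L$. Summing $j$ over $\vartheta$ steps gives the $\vartheta^3$ scaling (order $\sum_j j^2$ or $\vartheta\cdot\vartheta^2$). The $\varepsilon^2$ factor comes from (A.4). The normalization $1/(|\mathcal{D}|D(\bs^t))$ should arise from weighting the deviations by $|\mathcal{D}_k|/D(\bs^t)$ and the per-sample averaging. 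Concretely, we bound the sample-level variance of the aggregated deviation by $\varepsilon^2$ divided by the number of participating samples, in the spirit of \cite{xing2021federated, guo2022joint}.

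The main obstacle is this last step: obtaining exactly the $\frac{\vartheta^3\varepsilon^2}{2L|\mathcal{D}|D(\bs^t)}$ constant rather than a cruder $\vartheta^3\varepsilon^2/L$ bound. Lipschitzness is assumed only for the global $F$, not for each $F_k$, so the drift cannot simply be converted through a local Lipschitz constant. I would first try to follow the cited frameworks and bound the drift cross term directly via (A.4) and Cauchy–Schwarz. Any unavoidable cruder constants would be absorbed into the stated form by the same per-sample normalization argument used there.
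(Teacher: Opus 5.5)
There is a genuine gap, and it is exactly where the stated constants are produced.

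\textbf{The factor $\vartheta$.} Your completed-square identity is exact, but it must be applied to the actual displacement $\boldsymbol{\omega}_t-\boldsymbol{\omega}_{t-1}=-\frac{\vartheta}{L}\mathbf{g}_{t-1}$, where $\mathbf{g}_{t-1}=\nabla F(\boldsymbol{\omega}_{t-1})-\mathbf{e}_{t-1}+\mathbf{d}_{t-1}$. Here $\mathbf{d}_{t-1}$ is the step-averaged drift of \eqref{eq:d}, not your direction $\mathbf{d}$. This gives
\[
F(\boldsymbol{\omega}_t)\le F(\boldsymbol{\omega}_{t-1})-\frac{1}{2L}\|\nabla F(\boldsymbol{\omega}_{t-1})\|_2^2+\frac{1}{2L}\big\|(1-\vartheta)\nabla F(\boldsymbol{\omega}_{t-1})+\vartheta\mathbf{e}_{t-1}-\vartheta\mathbf{d}_{t-1}\big\|_2^2 .
\]
The last term contains $(\vartheta-1)^2\|\nabla F(\boldsymbol{\omega}_{t-1})\|_2^2$, so for $\vartheta\ge 2$ there is no factor-$\vartheta$ descent.

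Treating the local steps as ``$\vartheta$ successive descent steps anchored at $\boldsymbol{\omega}_{t-1}$'' is not an equivalent reformulation. For the per-step increments $\Delta_j$, the quadratic term of the descent lemma is $\frac{L}{2}\|\sum_j\Delta_j\|_2^2$, which grows like $\vartheta^2$; it is not $\sum_j\frac{L}{2}\|\Delta_j\|_2^2$. If you genuinely chain the descent lemma through $\bar{\boldsymbol{\omega}}^{(j)}=\frac{1}{D(\bs^t)}\sum_k s_k^t|\mathcal{D}_k|\boldsymbol{\omega}_k^{(j)}$, you get $-\frac{1}{2L}\sum_j\|\nabla F(\bar{\boldsymbol{\omega}}^{(j)})\|_2^2$. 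That has gradients at the intermediate points, not $\vartheta$ copies of $\|\nabla F(\boldsymbol{\omega}_{t-1})\|_2^2$. For $\vartheta=1$ your identity does prove the claim, since then $\mathbf{d}_{t-1}=\mathbf{0}$.

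\textbf{The drift constant.} The local updates use full local gradients, so there is no sampling randomness for a ``variance over participating samples'' argument to act on. Under (A.4) the drift is deterministic and only $O(\varepsilon)$ (e.g.\ $\|\mathbf{d}_{t-1}\|_2\le 2\varepsilon$), independent of $|\mathcal{D}|$. A cruder constant cannot be absorbed into a smaller stated one.

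\textbf{Comparison with the paper.} The paper follows the same skeleton but does not supply the missing idea. It reaches \eqref{eq:F22} by bounding the cross term of \eqref{eq:feq} and then dropping the $+\frac{\vartheta^2}{2L}\|\mathbf{g}_{t-1}\|_2^2$ term. Keeping that term, the exact identity leaves an extra $\frac{\vartheta(\vartheta-1)}{2L}\|\mathbf{g}_{t-1}\|_2^2$ and the term $\frac{\vartheta}{2L}\|\mathbf{e}_{t-1}-\mathbf{d}_{t-1}\|_2^2$. The paper also asserts \eqref{eq:d3} using smoothness of each $F_k$, which, as you noticed, (A.2) does not provide.

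The obstacles you hit are real. Suppose every sample has loss $\frac{L}{2}\|\boldsymbol{\omega}\|_2^2$, all devices are scheduled, and $\vartheta=2$. The first local step lands at $\mathbf{0}$, so $F(\boldsymbol{\omega}_t)=0$. Take $\varepsilon=L\|\boldsymbol{\omega}_{t-1}\|_2$, the gradient bound on the ball containing all iterates. Then the right side of \eqref{lemma1} equals $\frac{L}{2}\|\boldsymbol{\omega}_{t-1}\|_2^2\,(-1+8/|\mathcal{D}|^2)$, which is negative for $|\mathcal{D}|\ge 3$.

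A descent-lemma argument uses the assumptions only on that ball, so no proof along these lines can give the stated constants. A correct one-step bound must keep the $\vartheta(\vartheta-1)$ term (or the intermediate-point gradients). It must also carry a drift term without the $1/(|\mathcal{D}|D(\bs^t))$ factor.
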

\begin{IEEEproof}
    Proof is given in \textbf{Appendix A}.
\end{IEEEproof}
Using Lemma~\ref{lem:1}, we can bound the optimality gap as given in the following theorem.
\begin{theorem}\label{thm:1}
    Under the assumptions (A.1)-(A.4), the optimality gap after \( T \) communication rounds is bounded as:
\begin{align}
    O_T \leq \left(1 - \frac{\delta \vartheta}{L}\right)^T O_0 + \sum_{t=1}^{T} A_t \left(1 - \frac{\delta \vartheta}{L}\right)^{T-t},
\end{align}
where $A_t$ is given by 
\begin{align} \label{At}
    A_t = \frac{2 \vartheta\varepsilon^2}{L |\mathcal{D}|^2} 
    \brc{
    \abs{\mathcal{D}} - D(\bs^t)
    }^2+ \frac{\vartheta^3 \varepsilon^2}{2 L|\mathcal{D}|^2},
\end{align}
and $O_0$ denotes the optimality gap in the initial round.
\end{theorem}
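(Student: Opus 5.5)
Starting from Lemma~\ref{lem:1}, the plan is to turn the one-round descent inequality into a linear recursion for $O_t$ and then unroll it over $T$ rounds. Subtracting $F(\boldsymbol{\omega}^\star)$ from both sides of \eqref{lemma1} gives
\begin{align*}
O_t \leq O_{t-1} - \frac{\vartheta}{2L}\|\nabla F(\boldsymbol{\omega}_{t-1})\|_2^2 + \frac{\vartheta}{2L}\|\mathbf{e}_{t-1}\|_2^2 + \frac{\vartheta^3\varepsilon^2}{2L|\mathcal{D}|D(\bs^t)}.
\end{align*}
Assumption (A.3) gives $\|\nabla F(\boldsymbol{\omega}_{t-1})\|_2^2 \geq 2\delta O_{t-1}$. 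Substituting this yields $O_t \leq \brc{1-\frac{\delta\vartheta}{L}} O_{t-1} + \frac{\vartheta}{2L}\|\mathbf{e}_{t-1}\|_2^2 + (\text{last term})$. The remaining work is to bound the last two terms by a quantity $A_t$ that depends only on the schedule $\bs^t$.

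The main step is bounding the aggregation error $\mathbf{e}_{t-1}$ in \eqref{eq:e}. Write $\nabla F = \frac{1}{|\mathcal{D}|}\sum_k |\mathcal{D}_k|\nabla F_k$. Splitting the sum into scheduled and unscheduled devices gives
\begin{align*}
\mathbf{e}_{t-1} = \brc{\frac{1}{|\mathcal{D}|}-\frac{1}{D(\bs^t)}}\sum_k s_k^t|\mathcal{D}_k|\nabla F_k + \frac{1}{|\mathcal{D}|}\sum_k (1-s_k^t)|\mathcal{D}_k|\nabla F_k.
\end{align*}
Each $\nabla F_k$ is an average of per-sample gradients, so by (A.4) we have $\||\mathcal{D}_k|\nabla F_k\|\leq |\mathcal{D}_k|\varepsilon$. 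By the triangle inequality, the first term has norm at most $\frac{|\mathcal{D}|-D(\bs^t)}{|\mathcal{D}|D(\bs^t)}D(\bs^t)\varepsilon = \frac{|\mathcal{D}|-D(\bs^t)}{|\mathcal{D}|}\varepsilon$. The second term has the same bound, since $\sum_k(1-s_k^t)|\mathcal{D}_k| = |\mathcal{D}|-D(\bs^t)$. Hence $\|\mathbf{e}_{t-1}\|_2 \leq \frac{2\varepsilon}{|\mathcal{D}|}\brc{|\mathcal{D}|-D(\bs^t)}$, and therefore
\begin{align*}
\frac{\vartheta}{2L}\|\mathbf{e}_{t-1}\|_2^2 \leq \frac{2\vartheta\varepsilon^2}{L|\mathcal{D}|^2}\brc{|\mathcal{D}|-D(\bs^t)}^2,
\end{align*}
which matches the first term of \eqref{At}.

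The second term is the step I expect to be the obstacle. Lemma~\ref{lem:1} has $\frac{1}{|\mathcal{D}|D(\bs^t)}$, while \eqref{At} has $\frac{1}{|\mathcal{D}|^2}$. Since $D(\bs^t)\leq|\mathcal{D}|$, the inequality goes the wrong way for a direct replacement. I would first check the Appendix A derivation, where that term comes from bounding the drift of the $\vartheta$ local steps. My guess is that normalizing by the full dataset in that derivation yields $\frac{\vartheta^3\varepsilon^2}{2L|\mathcal{D}|^2}$ directly, for example if the drift is measured relative to $\nabla F$ weighted by $|\mathcal{D}_k|/|\mathcal{D}|$. Failing that, I would assume at least one device is scheduled with a lower bound on $D(\bs^t)$ and absorb the resulting constant. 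With the second term taken as in \eqref{At}, we obtain $O_t\leq(1-\frac{\delta\vartheta}{L})O_{t-1}+A_t$.

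Finally, unroll the recursion by induction on $T$. This requires $0\leq 1-\frac{\delta\vartheta}{L}$, which I would state as a condition implicit in the step-size choice, or else handle with absolute values. Iterating from $t=1$ to $T$ gives $O_T\leq (1-\frac{\delta\vartheta}{L})^T O_0 + \sum_{t=1}^T A_t(1-\frac{\delta\vartheta}{L})^{T-t}$, which is the claimed bound.
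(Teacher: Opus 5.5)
There is a genuine gap in the drift term, and your proposal does not close it. Writing ``with the second term taken as in $A_t$'' assumes exactly what has to be shown. Lemma~1 supplies $\frac{\vartheta^3\varepsilon^2}{2L|\mathcal{D}|D(\bs^t)}$, and since $D(\bs^t)\le|\mathcal{D}|$ this is \emph{at least} $\frac{\vartheta^3\varepsilon^2}{2L|\mathcal{D}|^2}$, as you correctly observe.

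Neither of your fallbacks repairs this. A lower bound $D(\bs^t)\ge D_{\min}>0$ only yields the constant $\frac{\vartheta^3\varepsilon^2}{2L|\mathcal{D}|D_{\min}}$, which is a different, weaker theorem. Appendix~A will not give you a $1/|\mathcal{D}|^2$ factor either. Its bound on $\|\mathbf{d}_{t-1}\|_2^2$ is asserted after a one-line ``Jensen/Cauchy--Schwarz'' remark. Following its stated ingredients gives something of order $\vartheta^2\varepsilon^2$: (A.4) plus the per-device smoothness it invokes give $\|\boldsymbol{\omega}_k^{(j)}-\boldsymbol{\omega}_{t-1}\|_2\le j\varepsilon/L$, hence $\|\mathbf{d}_{t-1}\|_2\le(\vartheta-1)\varepsilon/2$. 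That bound has no dataset-size normalization at all.

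The paper's own proof (Appendix~B) follows your route step for step: Lemma~1, the PL inequality (A.3), a bound on $\|\mathbf{e}_{t-1}\|_2$, then unrolling. It disposes of this term by exactly the replacement you rejected: ``since $D(\bs^t)\le|\mathcal{D}|$, we upper-bound it by $\frac{\vartheta^3\varepsilon^2}{2L|\mathcal{D}|^2}$.'' That inequality points the wrong way, so your diagnosis is right and the hole is in the paper's argument as well. What this route actually proves is the stated unrolled bound with the second term of $A_t$ replaced by $\frac{\vartheta^3\varepsilon^2}{2L|\mathcal{D}|D(\bs^t)}$. That term is still decreasing in $D(\bs^t)$, so the later use of $F_{\rm learn}$ as the scheduling-dependent surrogate is unaffected.

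The rest of your argument is sound and more explicit than the paper's. Your split of $\mathbf{e}_{t-1}$ into scheduled and unscheduled parts is what the paper compresses into ``using the boundedness of sample gradients,'' and it correctly yields the constant $\frac{2\vartheta\varepsilon^2}{L|\mathcal{D}|^2}$. The conditions you flag are also genuinely needed and are not stated in the paper. The unrolling requires $1-\frac{\delta\vartheta}{L}\ge 0$, which is not automatic once $\vartheta\ge 2$. The definition of $\mathbf{e}_{t-1}$ requires $D(\bs^t)>0$.
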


\begin{IEEEproof}
    Proof is given in \textbf{Appendix B}. 
\end{IEEEproof}
The first term in $A_t$ quantifies the aggregation error due to partial participation, while the second term captures the local-drift effect accumulated over $\vartheta$ local steps. See Appendix B for the detailed derivations and constants.

\subsubsection{Learning Performance}
The ultimate goal of distributed learning is to converge to the optimal global empirical risk. 
From Theorem~\ref{thm:1}, the optimality gap is upper bounded by a weighted sum of $A_t$. 
For a fixed learning setup, the quantities $L$, $\delta$, $\vartheta$, $\varepsilon$, and $|\mathcal D|$ are determined by the model, loss function, and local training configuration, and are therefore treated as constants. In this setting, the only term in $A_t$ that depends on the scheduling variable $\mathbf s^t$ is $D(\mathbf s^t)$. Hence, minimizing $A_t$ with respect to $\mathbf s^t$ is equivalent to minimizing 
\begin{align}
\argmin_{\bs^t} A_t = \argmin_{\bs^t} \abs{\mathcal{D}} - D(\bs^t).
\end{align}
This motivates defining the learning performance metric
\begin{align}\label{F_learn}
F_{\rm learn} = |\mathcal D| - D(\mathbf s^t),
\end{align}
which captures the scheduling-dependent component of the upper bound in Theorem~\ref{thm:1}. 
Minimizing $F_{\rm learn}$ therefore minimizes the scheduling-dependent component of the upper bound in Theorem~\ref{thm:1}. \color{black}


\subsection{Multi-Objective Problem Formulation}
The bound in Theorem~\ref{thm:1} characterizes the convergence behavior of the optimality gap across communication rounds. 
Let the total training time be defined as $\tau_T = \sum_{t=1}^{T} \tau_t$.
Minimizing $\tau_T$ while achieving a desired accuracy level requires jointly controlling both the per-round latency $\tau_t$ and the convergence behavior governed by $A_t$. Scheduling more devices per round improves the convergence rate by reducing $A_t$, but increases communication latency. 
Conversely, reducing per-round latency typically decreases the number of participating devices, which increases the aggregation error term $A_t$ and may slow the decay of the optimality gap, thereby requiring more rounds to reach a target accuracy. To capture this fundamental trade-off in a tractable manner, we formulate a per-round bi-objective optimization problem that jointly minimizes latency and the scheduling-dependent component of the optimality-gap bound:\color{black}
\begin{subequations}
\begin{align}
	{\mathcal{P}_1} &\min_{\bx, f_k, P_k, \bs^t, \boldsymbol{\tau}}\set{ \tau_t , F_{\rm learn}}\nonumber\\
	&\text{s.~t. } \; \;   \mathcal{C}_1: \dfrac{D_b s_k^t}{R_k} \leq \tau_k^{\text{cm}}, \quad \forall k \in [K], \\
 &\phantom{\text{s. t. }} \; \;  \mathcal{C}_2: P_k \tau_k^{\text{cm}} + \xi C_k |\mathcal{D}_k| f_k^2 \leq E_k^{\text{max}}, \forall k \in [K],\\
 &\phantom{\text{s. t. }} \; \;  \mathcal{C}_3: \bs^t \in \{0, 1\}^K, \quad \boldsymbol{\tau} \in \setR_+^K, \quad \forall k \in [K], \\
 &\phantom{\text{s. t. }} \; \;  \mathcal{C}_4: s_k^t \dfrac{C_k |\mathcal{D}_k| }{f_k} \leq \tau_t^\text{cp}, \quad \forall k \in [K],\\
  &\phantom{\text{s. t. }} \; \;  \mathcal{C}_5:  x_{n+1}^{\rm P} - x_{n}^{\rm P} \geq \Delta \ell  \quad \forall n \in [N], \\
    &\phantom{\text{s. t. }} \; \; \mathcal{C}_6: 0 \leq P_k \leq P, \; 0 \leq f_k \leq f_{\max}, \forall k \in [K], 
\end{align}
\end{subequations}
where the objectives $\tau_t$ and $F_{\rm learn}$ are defined in \eqref{tau} and \eqref{F_learn}, respectively. Furthermore, $\boldsymbol{\tau}=[\tau_1^\text{cm}, \cdots, \tau_K^\text{cm}, \tau_t^\text{cp}]^\trp$ and $\bx =[x_1, \ldots, x_N]^\trp$ denote the location of the pinching antennas on the waveguide. The constraints are given as follows:
\begin{itemize}
    \item[$\mathcal{C}_1$:] This constraint guarantees successful uplink model transmission by device $k$.  
    \item[$\mathcal{C}_2$:] It constrains the consumed energy at device $k$ to remain below the maximum energy budget. 
    \item[$\mathcal{C}_3$:] Constrains the support of scheduling mask $\bs^t$ and the scheduled time intervals, i.e., the computation and communication delays, to be in the feasible set.
    \item[$\mathcal{C}_4$:] This constraint guarantees that the allocated computation time $\tau_t^{\text{cp}}$ is larger than the maximum computation time of scheduled devices.
    \item[$\mathcal{C}_5$:] It constrains the minimum distance between two neighboring pinching antennas on the waveguide.
    \item[$\mathcal{C}_6$:] This constraint guarantees that the transmit power and computation clock frequency remain below the available power and frequency budget.
\end{itemize}

The optimization problem $\mathcal{P}_1$ is a MINLP due to the discrete nature of the scheduling mask \( \bs^t \), nonlinear constraints with tightly coupled optimization variables, and nonconvex objective functions.  

\section{Joint Latency-Learning Optimization in PASS-Assisted FL Systems}

\subsection{Equivalent Problem Reformulation}
Due to the competing  objectives in $\mathcal{P}_1$, i.e., minimizing communication latency and maximizing learning performance, the problem does not generally yield a unique global optimum. The classical approach is to define the so-called \emph{Pareto front}, which consists of Pareto-optimal solutions, where no objective can be improved without degrading the other.
\begin{definition} 
Let $\bu = \set{ \bx, f_1, \ldots,f_K, P_1, \ldots, P_K, \bs^t, \boldsymbol{\tau} }$ denote the design variables. A feasible solution \( \bu^\star \) is called \emph{Pareto optimal} if there does not exist another feasible solution \( \bu \) such that $\tau_t \leq \tau_t^\star$, $F_{\mathrm{learn}} \leq F_{\mathrm{learn}}^\star$ and at least one of the inequalities is strict. Here, $\tau_t^\star$ and $F_{\mathrm{learn}}^\star$ denote the objective values computed at $\bu^\star$. The \emph{Pareto front} is then defined as the set of all Pareto optimal objective pairs, i.e.,
$
\mathbb{P} = \left\{ \left( \tau_t, F_{\mathrm{learn}} \right): \bu \text{ is Pareto optimal} \right\}.
$
\end{definition}

Similar to single-objective optimization problems (SOOPs), Pareto-optimal solutions can be classified as local or global. In the convex setting, where both objective functions and the feasible region are convex, any local Pareto-optimal solution is also globally optimal. In the non-convex case, however, only local Pareto-optimal solutions can generally be computed using algorithmic approaches \cite{miettinen1999nonlinear}.  We adopt the scalarization approach, which allows us to convert the bi-objective problem into a single-objective formulation. This transformation enables algorithmic tractability while facilitating exploration of the Pareto front. Before applying scalarization, we first simplify the constraint set in $\mathcal{P}_1$ to reduce problem complexity. In particular, we begin by reformulating constraint $\mathcal{C}_4$ as a strict equality, followed by additional simplifications. Once the problem structure is streamlined, we proceed with the scalarization to obtain a tractable single-objective formulation.
\begin{preposition}
For any point in the Pareto front of problem $\mathcal{P}_1$, constraint $\mathcal{C}_4$ holds with equality.
\end{preposition}
\begin{IEEEproof}
Assume that an optimal solution $\{f_k^\star, \tau_t^{\text{cp}\star}\}$ satisfies constraint $\mathcal{C}_4$ with strict inequality for some device $k$, i.e., $\frac{s_k^t C_k |\mathcal{D}_k|}{f_k^\star} < \tau_t^{\text{cp}\star}$.
Now consider reducing the processing frequency to $\tilde{f}_k = \frac{s_k^t C_k |\mathcal{D}_k|}{\tau_t^{\text{cp}\star}}$.
Recalling that the computation energy at user $k$ is given by $E_k^{\text{cp}} = \xi C_k |\mathcal{D}_k| f_k^2$, the energy consumption at processing frequency $\tilde{f}_k$ becomes 
\begin{align}
\tilde{E}_k^{\text{cp}} = \xi C_k |\mathcal{D}_k| \tilde{f}_k^2 < \xi C_k |\mathcal{D}_k| (f_k^\star)^2 = E_k^{\text{cp}\star}.
\end{align}
 Since $\frac{s_k^t C_k |\mathcal D_k|}{f_k^\star} < \tau_t^{\rm cp\star}$, device $k$ is not the bottleneck in the computation phase. Hence, reducing $f_k$ decreases the computation energy while keeping $\tau_t^{\text{cp}}$ unchanged, so the overall latency $\tau_t$ and the learning metric $F_{\rm learn}$ remain unaffected and all constraints remain satisfied. Therefore, at any Pareto-optimal solution, constraint $\mathcal{C}_4$ cannot hold with strict inequality, and must hold with equality. \color{black}
\end{IEEEproof}

Based on Proposition 1, we can enforce $\mathcal{C}_4$ in $\mathcal{P}_1$ with equality. This yields $f_k = \frac{s_k^t C_k |\mathcal{D}_k|}{\tau_t^{\text{cp}}}$, for all $k \in [K]$. We next use the fact that $P_k = {E_k^{\text{cm}}}/{\tau_k^{\text{cm}}}$ to substitute the variable $P_k$ in $\mathcal{P}_1$ by $E_k^{\text{cm}} = P_k \tau_k^{\text{cm}}$. Under this variable exchange, the constraint $\mathcal{C}_2$ can be written as 
$
    E_k^{\text{cm}} + \xi s_k^t \frac{(C_k |\mathcal{D}_k|)^3}{(\tau_t^{\text{cp}})^2} \leq E_k^{\text{max}},
$
which leads to the following variational form:
\begin{subequations}
\begin{align}
	{\mathcal{P}_2} &\min_{\bx, E_k^{\rm cm}, \bs^t, \boldsymbol{\tau}}\set{ \tau_t , F_{\rm learn}},\nonumber\\
	&\text{s.~t. } \; \; \;  \bar{\mathcal{C}}_1: \dfrac{D_b s_k^t}{R_k} \leq \tau_k^{\text{cm}}, \\
 &\phantom{\text{s. t. }} \; \; \; \bar{\mathcal{C}}_2: E_k^{\text{cm}} + \xi s_k^t \frac{(C_k |\mathcal{D}_k|)^3}{(\tau_t^{\text{cp}})^2} \leq E_k^{\text{max}},\\
 &\phantom{\text{s. t. }} \; \; \; \bar{\mathcal{C}}_3: \bs^t \in \{0, 1\}^K, \quad \boldsymbol{\tau} \in \setR_+^K, \\
  &\phantom{\text{s. t. }} \; \; \; \bar{\mathcal{C}}_4:  x_{n+1}^{\rm P} - x_{n}^{\rm P} \geq \Delta \ell, \\
    &\phantom{\text{s. t. }} \; \; \; \bar{\mathcal{C}}_5: 0 \leq E_k^{\rm cm} \leq P\tau_k^{\rm cm}, \\
     &\phantom{\text{s. t. }} \; \; \; \bar{\mathcal{C}}_6: \frac{s_k^t C_k |\mathcal{D}_k|}{\tau_t^{\text{cp}}} \leq f_{\max}. 
\end{align}
\end{subequations}

\subsection{MOOP to SOOP Transformation}
Starting with $\mathcal{P}_2$, we reformulate the underlying MOOP as a single-objective problem, denoted as $\mathcal{P}_3$, in which a weighted average of the two objectives is to be optimized. More precisely, we define 
$
    F_\lambda \brc{\bu} = \lambda \tau_t  + \brc{1-\lambda} F_{\rm learn}
$
for some scalarization factor $\lambda \in (0,1)$. Using scalarization, weighted combinations of the two objectives allow us to explore the tradeoff between latency and learning performance within the feasible region. Hence, by sweeping $\lambda \in (0,1)$, we obtain a set of tradeoff solutions that provide an inner approximation of the Pareto front:
\begin{align}
	\mathcal{P}_3: &\min_{\bx,  E_k^{\rm cm}, \bs^t, \boldsymbol{\tau}} F_\lambda \brc{\bu} \nonumber
	&&\text{s.~t. } \; \;  (\bar{\mathcal{C}}_1)-(\bar{\mathcal{C}}_6). \nonumber
\end{align}
In the sequel, we solve $\mathcal{P}_3$ for a given $\lambda \in (0,1)$. Expanding $F_\lambda(\bu)$ yields
\begin{align}
    &F_\lambda \brc{\bu} = \lambda 
    (\sum_{k=1}^{K} s_k^t\tau_k^\text{cm} + \tau_t^{\text{cp}})
    + \brc{1-\lambda} (\abs{\mathcal{D}} - \sum_{k=1}^{K} s_k^{t} | \mathcal{D}_k|),\nonumber\\ \nonumber
    &=
    \lambda \tau_t^{\text{cp}} +
    \sum_{k=1}^{K} s_k^t \brc{\lambda \tau_k^\text{cm} - \brc{1-\lambda} | \mathcal{D}_k|}
    + \brc{1-\lambda} \abs{\mathcal{D}},\\ \nonumber
    &=
    \lambda \dbc{\tau_t^{\text{cp}} + \sum_{k=1}^{K} s_k^t \brc{\tau_k^\text{cm} + |\mathcal{D}_k|} } - D\brc{\bs^t}
    + \brc{1-\lambda} \abs{\mathcal{D}}.
\end{align}
We can hence re-write $\mathcal{P}_3$ as follows: 
\begin{align}
	\tilde{\mathcal{P}}_3: &\min_{\bu} \tilde{F}_\lambda \brc{\bu} 
	&&\text{s.~t. } (\bar{\mathcal{C}}_1)-(\bar{\mathcal{C}}_6),
\end{align}
where $\tilde{F}_\lambda$ is defined as
\begin{align}
    &\tilde{F}_\lambda \brc{\bu} =
    \lambda \dbc{\tau_t^{\text{cp}} + \sum_{k=1}^{K} s_k^t \brc{\tau_k^\text{cm} + |\mathcal{D}_k|} } - D\brc{\bs^t}.
\end{align}
To develop an algorithmic solution to the problem $\tilde{\mathcal{P}}_3$, we note that there are three major challenges which lead to computational hardness: 1) The scheduling mask is integer-valued. 2) The objective and constraints contain several cross-product terms which make them non-convex. 3) The term $G_k\brc{\bx}$ depends on the location of antennas through an extremely non-linear term. To address these challenges, we develop a two-tier iterative algorithm. Particularly, we address the integer problem by relaxing $\bs^t$ to the convex set $[0,1]^K$ and approximating the scheduling mask by thresholding the relaxed solution. This reduces the problem to a non-convex problem whose all marginals, i.e., optimization over a subset of variables in $\bu$ while the others are kept fixed, except the marginal on $\bx$, are convex. We hence develop an outer loop which uses block-coordinate descent (BCD) method to approximate the optimal design for a fixed $\bx$. Considering the highly-oscillating nature of $G_k\brc{\bx}$, we further develop an inner loop which handles the marginal optimization over $\bx$. The details are given in the next section.
\subsection{Proposed Two-Tier Block Coordinate Descent Algorithm}
We start our derivation by extending the scheduling mask $\bs^t$ to belong to the hypercube $[0,1]^K$. This relaxation transforms the original problem into a more tractable form. Based on this relaxed formulation, we design a two-tier iterative algorithm to progressively approximate the solution.
To this end, we divide the optimization variables as $\bu = \dbc{\bx, \bv}$, where $\bv = \set{  E^{\rm cm}_1, \ldots, E_K^{\rm cm}, \bs^t, \boldsymbol{\tau} }$. Using the BCD method, we approximate the solution of $\tilde{\mathcal{P}}_3$ by iterating between the solutions of the marginal problems on $\bx$ and $\bv$. Namely, let $\bx_i$ be the solution for $\bx$ in iteration $i$. We update $\bv$ to $\bv_{i+1}$ which is the solution to the following marginal problem:
\begin{subequations}
\begin{align}
	\mathcal{M}_1: &\min_{\bv} \tilde{F}_\lambda \brc{\dbc{ \bx_i, \bv }} \nonumber\\
	&\text{s.~t. } \; \; \;  \check{\mathcal{C}}_1: \dfrac{D_b s_k^t}{R_k} \leq \tau_k^{\text{cm}}, \\
 &\phantom{\text{s. t. }} \; \; \; \check{\mathcal{C}}_2: E_k^{\text{cm}} + \xi s_k^t \frac{(C_k |\mathcal{D}_k|)^3}{(\tau_t^{\text{cp}})^2} \leq E_k^{\text{max}},\\
 &\phantom{\text{s. t. }} \; \; \; \check{\mathcal{C}}_3: \bs^t \in [0,1]^K, \quad \boldsymbol{\tau} \in \setR_+^K, \\
    &\phantom{\text{s. t. }} \; \; \; \check{\mathcal{C}}_4: 0 \leq E_k^{\rm cm} \leq P\tau_k^{\rm cm},\\
    &\phantom{\text{s. t. }} \; \; \; \check{\mathcal{C}}_5: \frac{s_k^t C_k |\mathcal{D}_k|}{\tau_t^{\text{cp}}} \leq f_{\max}. 
\end{align}
\end{subequations}
where $\check{\mathcal{C}}_i$ denote the constraints obtained from $\bar{\mathcal{C}}_i$ after fixing $\mathbf{x}=\mathbf{x}_i$.
Using this solution, we update $\bx$ to $\bx_{i+1}$ by solving
\begin{subequations}
\begin{align}
	\mathcal{M}_2: &\min_{\bx} \tilde{F}_\lambda \brc{\dbc{ \bx, \bv_{i+1} }} \nonumber\\
	&\text{s.~t. } \; \; \;  \hat{\mathcal{C}}_1: \dfrac{D_b s_k^t}{R_k} \leq \tau_k^{\text{cm}}, \\
    &\phantom{\text{s. t. }} \; \; \; \hat{\mathcal{C}}_2:  x_{n+1}^{\rm P} - x_{n}^{\rm P} \geq \Delta \ell.
\end{align}
\end{subequations}
where $\hat{\mathcal{C}}_i$ denote the subset of constraints relevant to $\mathbf{x}$ obtained from $\bar{\mathcal{C}}_i$ after fixing $\bv=\bv_{i+1}$.
\subsection{Outer Loop}
To construct the inner loop, we first observe that the objective function is marginally linear in terms of $\bs^t$ and $\boldsymbol{\tau}$. Furthermore, we note that 
\begin{equation}
R_k  =  B \log_2 \left( 1 + \frac{E_k^{\rm cm} \eta }{ \tau_k^{\rm cm } \sigma^2} G_k (\bx) \right),
\end{equation}
is concave in $E_k^{\rm cm}$, while
$
\tau_k^{\rm cm } B
\log_2 \left( 1 + \frac{E_k^{\rm cm} \eta }{ \tau_k^{\rm cm } \sigma^2} G_k (\bx) \right),
$
is concave in $\tau_k^{\rm cm }$. This indicates that constraint $\check{\mathcal{C}}_1$ is marginally convex for $E_k^{\rm cm}$ and $\boldsymbol{\tau}$. Since $\tau_t^{\rm cp} > 0$ and $E_k^{\rm cm} < E_k^{\max}$ for feasible points, we can equivalently rewrite $\check{\mathcal{C}}_2$ as the following convex constraint (affine in $\tau_t^{\rm cp}$):
\begin{align}
\tau_t^{\rm cp}
\;\ge\;
\sqrt{
\xi s_k^t \frac{(C_k |\mathcal{D}_k|)^3}{E_k^{\text{max}} - E_k^{\text{cm}}}
}.
\end{align}
This suggests that we can address the marginal problem $\mathcal{M}_1$ via BCD approach, where we split the optimization variables as $\bv = \set{\boldsymbol{\tau}} \cup \set{\bs^t} \cup \set{E_1^{\rm cm}, \ldots, E_K^{\rm cm}}$. The inner problem for each of these variables describes a convex program. 
The first inner problem \color{black} is given by:
\begin{align}
	\mathcal{I}_1&: \min_{\boldsymbol{\tau}} 
    {\tau_t^{\text{cp}} + \sum_{k=1}^{K} s_k^t {\tau_k^\text{cm} } }
    \nonumber\\
	&\text{s.~t. }
    {D_b s_k^t} - \tau_k^{\text{cm}} B \log_2 \left( 1 + \frac{E_k^{\rm cm} \eta }{ \tau_k^{\rm cm } \sigma^2} G_k (\bx) \right) \leq 0, \\
 &\phantom{\text{s. t. }} \;
 \sqrt{
\xi s_k^t \frac{(C_k |\mathcal{D}_k|)^3}{E_k^{\text{max}} -E_k^{\text{cm}}}} - \tau_t^{\text{cp}} \leq 0,\\
&\phantom{\text{s. t. }} \; {s_k^t C_k |\mathcal{D}_k|} - f_{\max} {\tau_t^{\text{cp}}} \leq 0,
 \phantom{\text{s. t. }} \; \boldsymbol{\tau} \in \setR_+^K.
\end{align}
The problem is  convex in $\boldsymbol{\tau}$, as the objective is linear and all constraints are convex. 
We solve $\mathcal{I}_1$ via an interior-point method in CVX.
The second inner problem \color{black}is given by:
\begin{align}
	\mathcal{I}_2: &\min_{\bs^t } 
    {\tau_t^{\text{cp}} + \sum_{k=1}^{K} s_k^t {\tau_k^\text{cm} } }
    \nonumber\\
	&\text{s.~t. }
    {D_b s_k^t} - \tau_k^{\text{cm}} B \log_2 \left( 1 + \frac{E_k^{\rm cm} \eta }{ \tau_k^{\rm cm } \sigma^2} G_k (\bx) \right) \leq 0, \\
 &\phantom{\text{s. t. }} \;
 E_k^{\text{cm}} - E_k^{\text{max}} + \xi s_k^t \frac{(C_k |\mathcal{D}_k|)^3}{(\tau_t^{\text{cp}})^2} \leq 0,\\
 &\phantom{\text{s. t. }} \; {s_k^t C_k |\mathcal{D}_k|} - f_{\max} {\tau_t^{\text{cp}}} \leq 0,\\
 &\phantom{\text{s. t. }} \; \bs^t \in [0,1]^K.
\end{align}
With $(\boldsymbol{\tau}, \{E_k^{\rm cm}\})$ fixed, the problem is linear in $\bs^t$, and can be solved via linear programming.

(iii) The third inner \color{black}problem depends only through the constraints on the optimization variables. We hence find a solution by a point in the Pareto front of the following MOOP
\begin{align}
	\mathcal{I}_3: &\max_{E_k^{\rm cm} } \set{R_1, \ldots, R_K}
    \nonumber\\
	&\text{s.~t. }
 E_k^{\text{cm}} + \xi s_k^t \frac{(C_k |\mathcal{D}_k|)^3}{(\tau_t^{\text{cp}})^2} - E_k^{\text{max}}\leq 0,\\
 &\phantom{\text{s. t. }} \; 0 \leq E_k^{\rm cm} \leq P\tau_k^{\rm cm}. 
\end{align}
Noting that this problem is convex, one can readily find the Pareto front by solving the following sum-rate optimization:
\begin{align}
	\tilde{\mathcal{I}}_3: &\max_{E_k^{\rm cm} } \sum_{k=1}^K \theta_k R_k
    \nonumber\\
	&\text{s.~t. }
 E_k^{\text{cm}} + \xi s_k^t \frac{(C_k |\mathcal{D}_k|)^3}{(\tau_t^{\text{cp}})^2} - E_k^{\text{max}}\leq 0,\\
 &\phantom{\text{s. t. }} \; 0 \leq E_k^{\rm cm} \leq P\tau_k^{\rm cm}.
\end{align}
The scalarized problem is convex because $R_k$ is concave in $E_k^{\rm cm}$ and all constraints are affine.  
We solve $\tilde{\mathcal{I}}_3$ using CVX.
 \subsection{Inner Loop}\color{black}
To construct the inner loop, we note that  the objective in the marginal problem $\mathcal{M}_2$ depends on $\bx$ only through  $\hat{\mathcal{C}}_1$, so a solution can be found by solving the following:
\begin{subequations}
\begin{align}
	\tilde{\mathcal{M}}_2: &\max_{\bx} \set{R_1, \ldots, R_K} \nonumber\\
	&\text{s.~t. } \; \; \; \hat{\mathcal{C}}_2:  x_{n+1}^{\rm P} - x_{n}^{\rm P} \geq \Delta \ell,
\end{align}
\end{subequations}
which can be scalarized as follows:
\begin{subequations}
\begin{align}
	\hat{\mathcal{M}}_2: &\max_{\bx} \sum_{k=1}^K \theta_k R_k, \nonumber\\
	&\text{s.~t. } \; \; \; \hat{\mathcal{C}}_2:  x_{n+1}^{\rm P} - x_{n}^{\rm P} \geq \Delta \ell,
\end{align}
\end{subequations}
for some convex coefficients $\theta_1, \ldots, \theta_K$. 
Since the system operates in TDMA mode, each device's data rate $ R_k$ can be optimized individually. Therefore, we solve $\hat{\mathcal{M}}_2$ separately for each device $k$ as follows:
\begin{align}
	 &\max_{\mathbf{x}_k^{\rm P}} R_k, \nonumber\\
	&\text{s.~t. } \; \; \; \hat{\mathcal{C}}_2:  x_{n+1, k}^{\rm P} - x_{n, k}^{\rm P} \geq \Delta \ell, \label{eq:M24}
\end{align}
where \( \mathbf{x}_k^{\rm P} = [x_{1,k}^{\rm P}, x_{2,k}^{\rm P}, \dots, x_{N,k}^{\rm P}] \) denotes the vector of scheduling locations for device \(k\).

\subsubsection{Gauss--Seidel coordinate update}
We start from the scalarized objective of $\hat{\mathcal{M}}_2$:
\[
\mathcal{J}(\mathbf{x}) \triangleq \sum_{k=1}^K \theta_k R_k(\mathbf{x})
= \sum_{k=1}^K \theta_k \log_2\!\Big(1 + a_k G_k(\mathbf{x})\Big),
\]
where $a_k \triangleq \frac{P_k \eta}{N\sigma^2}$ and $G_k(\mathbf{x})$ is the squared-magnitude channel gain in \eqref{gequation}.  
Since the scalar function $f(z) = \log_2(1+z)$ is concave for $z \ge 0$, Jensen's inequality yields
\begin{align}
\sum_{k=1}^K \theta_k \log_2\!\big(1 + a_k G_k\big)
\;\le\;
\log_2\!\Big( 1 + \sum_{k=1}^K \theta_k a_k G_k \Big).
\label{eq:Jensen}
\end{align}
where $\theta_k \ge 0$ and $\sum_{k=1}^K \theta_k = 1$.
Let $w_k \triangleq \theta_k a_k$ denote the combined user weights.
Because $\log_2(1+\cdot)$ is monotonically increasing, maximizing the Jensen upper bound (the right-hand side of \eqref{eq:Jensen}) is equivalent to
\begin{equation}\label{eq:weightedG}
\max_{\mathbf{x}} \; \sum_{k=1}^K w_k\, G_k(\mathbf{x})
\quad\text{s.t.}\quad
x_{n+1}^{\rm P} - x_{n}^{\rm P} \geq \Delta \ell, \;\; \forall n.
\end{equation}
Thus, we adopt the weighted-sum-of-gains in \eqref{eq:weightedG} as a tractable surrogate objective to optimize the placement.

\paragraph*{Coordinate isolation}
The objective in \eqref{eq:weightedG} couples all antenna positions through $G_k(\mathbf{x})$.  
We address this via a Gauss--Seidel-type approach \cite{dahlquist2008numerical}, where the antenna positions are updated sequentially.  
For a given index $n$, we treat $x_{n}$ as the only optimization variable and hold all other coordinates fixed. Define the \emph{single-antenna contribution function}
\begin{equation}\label{eq:Pi_def}
\Pi_k(x) \triangleq
\frac{\exp\!\big(-j(\kappa D_k(x) + \kappa_g \ell(x))\big)}{D_k(x)},
\end{equation}
where $D_k(x) \triangleq \sqrt{(x_k^{\mathrm{U}} - x)^2 + (y_k^{\mathrm{U}})^2 + d^2}$ is the free-space distance from user $k$ to coordinate $x$,  
$\ell(x)$ is the guided-path length from the feed-point to $x$,  
and $\kappa = 2\pi/\lambda$, $\kappa_g = 2\pi/\lambda_g$ are the free-space and guided wavenumbers, respectively.

Let $C_{k,n} \triangleq \sum_{\substack{m=1\\ m \neq n}}^N \Pi_k(x_m)$ be the (fixed) sum of contributions from all other antennas when antenna $n$ is excluded.
Then the user-$k$ channel gain can be written as
\begin{equation}
G_k(\mathbf{x}) = \big| \Pi_k(x) + C_{k,n} \big|^2.
\end{equation}
Plugging into \eqref{eq:weightedG} and collecting terms that depend on \(x\), the objective to maximize is
\begin{align}
\Phi_n(x)
&\triangleq \sum_{k=1}^K w_k \big|\Pi_k(x)+C_{k,n}\big|^2, \\
&= \sum_{k=1}^K w_k\Big( |C_{k,n}|^2 + 2\Re\{C_{k,n}^*\Pi_k(x)\} + |\Pi_k(x)|^2 \Big).\nonumber
\end{align}
Only the middle and last terms depend on \(x\). By dropping the constant term \(\sum_k w_k |C_{k,n}|^2\), we obtain 
\begin{equation}\label{eq:Phi_compact}
\Phi_n(x) \;=\; 2\,\Re\Big\{ \sum_{k=1}^K \zeta_{k,n}\, \Pi_k(x) \Big\}
\;+\; \sum_{k=1}^K \vartheta_{k,n}\, |\Pi_k(x)|^2,
\end{equation}
where $\zeta_{k,n} \triangleq w_k\, C_{k,n}^*$ and $\vartheta_{k,n} \triangleq w_k$. Hence, the coordinate update for antenna \(n\) becomes the 1-D optimization
\begin{align}\label{eq:scalar_opt}
x_n^\star \;=\;
\arg\max_{x\in\mathcal{X}_n}\; \Phi_n(x),
\end{align}
with feasible set
\[
\mathcal{X}_n
= \left\{ x \in [0,L_1] : |x - x_m| \ge \Delta\ell, \;\forall m \neq n \right\}.
\]
We solve \eqref{eq:scalar_opt} via an exhaustive grid search over $\mathcal{X}_n$.
In practice, we evaluate \eqref{eq:scalar_opt} over a uniform $L$-point grid on $[0,L_1]$,
discarding candidate locations that violate the minimum spacing constraint.
The feasible discrete set for coordinate $n$ is
\[
\mathcal{X}_n^{\mathrm{grid}}
\!=\! \left\{ \frac{\ell L_1}{L-1} \!\!\!\!\;:\;\!\!\! \ell = 0,\dots,L-1,\; |x - x_m| \ge \Delta\ell,\ \!\!\forall m \ne n \!\right\}\!.
\]
The coordinate update is then approximated by
\[
x_n^\star \;\approx\; \arg\max_{x \in \mathcal{X}_n^{\mathrm{grid}}} \; \Phi_n(x).
\]
\begin{algorithm}[t]
\caption{Two-tier Iterative Optimization}
\label{alg:final}
\begin{algorithmic}[1]
\State \textbf{Initialize:}  
    Set initial configuration $\mathbf{u}^{(0)}$, antenna positions $\mathbf{x}^{(0)}$,  
    tolerances $\epsilon_{\mathrm{outer}}, \epsilon_{\mathrm{inner}}$, and max iterations $T_{\mathrm{outer}}, T_{\mathrm{inner}}$.
\For{$t_{\mathrm{outer}} = 1, 2, \dots, T_{\mathrm{outer}}$}
    \State \textbf{(Outer-1)} Solve \(\mathcal{I}_1\) (Delay minimization) via convex programming.
    \State \textbf{(Outer-2)} Solve \(\mathcal{I}_2\) (scheduling) via linear programming.
    \State \textbf{(Outer-3)} Choose convex coefficients $\theta_1, \dots, \theta_K$.
    \State \textbf{(Outer-4)} Solve \(\tilde{\mathcal{I}}_3\) (power allocation).
    \Statex

    \State \textbf{Inner Loop: Gauss--Seidel Antenna Placement}
    \For{$t_{\mathrm{inner}} = 1, 2, \dots, T_{\mathrm{inner}}$}
        \For{$n = 1, 2, \dots, N$}
            \State Compute $C_{k,n} = \sum_{\substack{m=1\\m \neq n}}^N \Pi_k(x_m)$ for all $k$.
            \State Evaluate $\Phi_n(x)$ from \eqref{eq:Phi_compact}.
            \State Construct feasible grid set $\mathcal{X}_n^{\mathrm{grid}}$ excluding points violating spacing.
            \State Update $x_n \leftarrow \arg\max_{x \in \mathcal{X}_n^{\mathrm{grid}}} \Phi_n(x)$.
        \EndFor
        \If{antenna positions converge within $\epsilon_{\mathrm{inner}}$}
            \State \textbf{break}
        \EndIf
    \EndFor
\EndFor
\end{algorithmic}
\end{algorithm}
\subsection{Computational Complexity}
In general, the proposed two-tier optimization algorithm  alternates between an outer BCD procedure and an inner Gauss-Seidel antenna-placement refinement. Let $T_{\mathrm{out}}$ and $T_{\mathrm{in}}$ denote the numbers of outer and inner iterations, respectively. In the outer loop, the delay optimization, scheduling, and power allocation subproblems involve $\mathcal{O}(K)$ variables, where $K$ is the number of devices. Since these subproblems are convex (or linear) programs solved via interior-point methods, their worst-case complexity scales as $\mathcal{O}(K^3)$ per outer iteration.

Within each outer iteration, the inner loop sequentially updates the $N$ antenna positions. For each antenna, a one-dimensional grid search over $L$ candidate points is performed, and evaluating the objective at each grid point requires $\mathcal{O}(K)$ operations. Hence, one Gauss-Seidel sweep incurs $\mathcal{O}(N K L)$ complexity, and the total inner-loop cost is $\mathcal{O}(T_{\mathrm{in}} N K L)$. 

The overall complexity of the proposed algorithm is $\mathcal{O}\!\left(T_{\mathrm{out}}\left[K^3 + T_{\mathrm{in}} N K L\right]\right)$. When $L$, $N$, and $T_{\mathrm{in}}$ are moderate and do not scale with $K$, 
the dominant term is $\mathcal{O}(K^3)$, implying cubic scaling with respect to the number of devices.
\color{black}
\section{Numerical Result and Discussions}
In this section, we present numerical experiments to validate the derived expressions and evaluate the performance of \ac{pa} under an \ac{fl} setting. We begin by describing the simulation setup, followed by the presentation of the numerical results. 

\subsection{Communication Settings}
In the numerical simulations, we consider a rectangular area of size $30\,\text{m} \times 20\,\text{m}$, with the server located at the midpoint of the left boundary and a total of $K = 12$ users randomly placed within the region.
The height of the pinching antenna is set to $d = 3\,\text{m}$. 
The carrier frequency is set to $f_c = 28\,\text{GHz}$, the noise power spectral density is fixed at $-174\,\text{dBm/Hz}$, and the effective noise factor is set to $n_{\mathrm{eff}} = 1.44$.
Each device has a maximum transmit power budget of $P = 0.2\,\text{W}$ and a maximum local CPU frequency of $f_{\max} = 1.8\,\text{GHz}$. 
The computation load per data sample is set to $C_k = 10^{6}$ CPU cycles. 
The maximum computation energy budget per global round is constrained to $E_k^{\max} = 0.1\,\text{J}$ for all devices. The distance between the pinching elements is set to be larger than $\Delta\ell = \frac{\lambda}{2}$. We compare against three reference schemes:
\begin{itemize}
\item \textit{Conventional FL}: The conventional fixed-antenna baseline, where the base station is equipped with $N$ antennas at the center of the square region. The same scheduling and resource allocation procedure as in the proposed method is applied.
\item \textit{PASS-Uniform}: A \ac{pass} equipped with $N$ equally spaced antenna elements along the dielectric waveguide. The scheduling and power allocation follow the same optimization framework.
\item \textit{Perfect FL}: The ideal setting, where all devices participate in every round over perfect communication links. In this case, the server aggregates the local updates using standard FedAvg. Since the wireless channel is ideal, this benchmark reflects the best achievable learning performance.
\end{itemize}
\subsection{Learning Model and Datasets}
We evaluate \ac{pass}-aided FL on image-classification tasks with MNIST and CIFAR-10 datasets. 
The MNIST dataset contains 10 classes of grayscale handwritten digit images, with 60,000 training samples and 10,000 test samples \cite{xiao2017fashion}. The CIFAR-10 dataset consists of 10 classes of color images, with 6,000 images per class. Each class is split into 5,000 training images and 1,000 test images \cite{krizhevsky2009learning}. For MNIST, we adopt a four-layer convolutional network with two $3\times 3$ convolutional layers (32 and 64 channels, respectively), each followed by ReLU activation and $2\times 2$ max-pooling, and two fully connected layers with 128 hidden units. For CIFAR-10, we use a slightly deeper CNN with four $3\times 3$ convolutional layers (with 32, 32, 64 and 64 channels, respectively) grouped into two blocks, each block followed by $2\times 2$ max-pooling. A final $3\times 3$ convolution with 128 channels and an adaptive average pooling to a $4\times 4$ spatial resolution are applied to the output of the second block. The output is computed by a classifier with a 256-dimensional fully connected layer.  All hidden layers use ReLU activation.

To simulate the inherent non-uniformity of data in practical \ac{fl} systems, the training dataset is randomly shuffled and then partitioned across devices in a non-i.i.d. manner. We employ a Dirichlet-based label distribution with concentration parameter $\alpha = 0.35$. Specifically, for each class, its samples are allocated to the $K$ devices according to a Dirichlet$(\alpha)$ draw. Smaller values of $\alpha$ yield more skewed partitions, meaning that individual devices observe imbalanced and device-specific class distributions. This setup reflects realistic FL environments where clients hold statistically diverse local data \cite{marfoq2021federated}.

\subsection{Results and Discussions}
\par\textit{{1) PASS-Aided Training Accuracy}}: Fig.~\ref{fig:Fig11} illustrates the performance of FedPASS on the MNIST dataset and compares it with the benchmark schemes. The figure plots the test accuracy as a function of the number of communication rounds. As observed, the proposed FedPASS achieves a test accuracy that is remarkably close to the ideal Perfect FL benchmark, and it significantly outperforms both the conventional fixed-antenna baseline and the uniformly spaced PASS configuration. This improvement stems from the deployment of pinching antennas, which can be repositioned to locations closest to the active user in each time slot. By doing so, the system maximally reduces large-scale path loss and enhances the effective channel gain, thereby enabling more reliable and efficient model aggregation across rounds.

\par Fig.~\ref{fig:Fig1} shows the performance of the proposed PASS-aided framework on the CIFAR-10 dataset. Similar to the MNIST results, the PASS-optimized scheme consistently outperforms both benchmark methods. However, the performance gap between the proposed method and the benchmarks becomes more pronounced in this case. This arises because CIFAR-10 presents a substantially more challenging classification task with higher-dimensional inputs and more complex visual patterns compared to MNIST. Under such conditions, communication inefficiencies in the benchmark schemes lead to less reliable model updates, which amplifies their degradation in learning performance. 
In contrast, the PASS-aided design maintains strong uplink channels through optimized pinching-antenna placement, allowing more devices to successfully participate in each round. This reduces the aggregation error term and improves the convergence behavior, thereby preserving learning performance even in more challenging scenarios.
\begin{figure}[t!]
	\centering
    \includegraphics[height=0.33\textwidth]{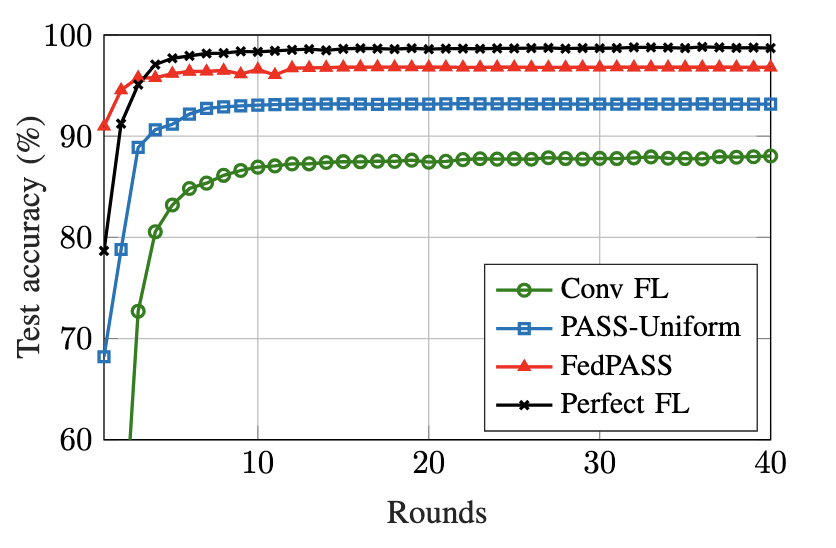}
	\caption{ Test accuracy vs the number of  rounds for MNIST dataset} 
\label{fig:Fig11}
\end{figure}
\begin{figure}[t!]
\centering
\includegraphics[height=0.33\textwidth]{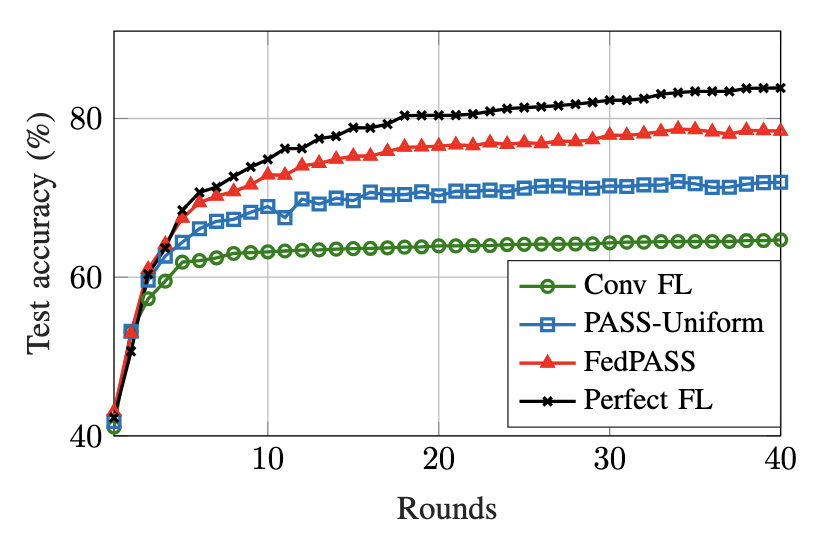}
\caption{Test accuracy vs the number of rounds for CIFAR dataset}
\label{fig:Fig1}
\end{figure}

\par \textit{{2) Total Latency}}: Fig.~\ref{fig:Fig2} illustrates the test accuracy versus total latency for the proposed FedPASS framework and the benchmark schemes, evaluated across different values of $K$. As the figure shows, FedPASS reaches the target accuracy approximately $6.4\times$ faster than the baseline methods. For all choices of $K$, the proposed PASS-FL design achieves the target accuracy significantly earlier than the conventional fixed-antenna baseline. Moreover, the performance gap becomes more pronounced as the number of users increases.

As $K$ grows, the conventional method experiences a significant increase in communication delay because more devices must complete their uplink transmissions before each round can proceed. This results in longer round duration and slower learning process. In contrast, the proposed PASS-FL method effectively mitigates large-scale path loss and improves the communication quality of users with weak channels through adaptive pinching-antenna placement. This leads to a substantially lower per-round latency and accelerates the completion of each global iteration. The comparison in Fig.~\ref{fig:Fig2} demonstrates that the proposed method scales more favorably with the number of participating users and consistently ensures faster round execution under identical system settings. These results highlight that the physical-layer enhancements introduced by PASS directly lead to a faster convergence rate in practice, achieving higher accuracy in significantly less time.
\begin{figure}[t!]
	\centering
    \includegraphics[height=0.30\textwidth]{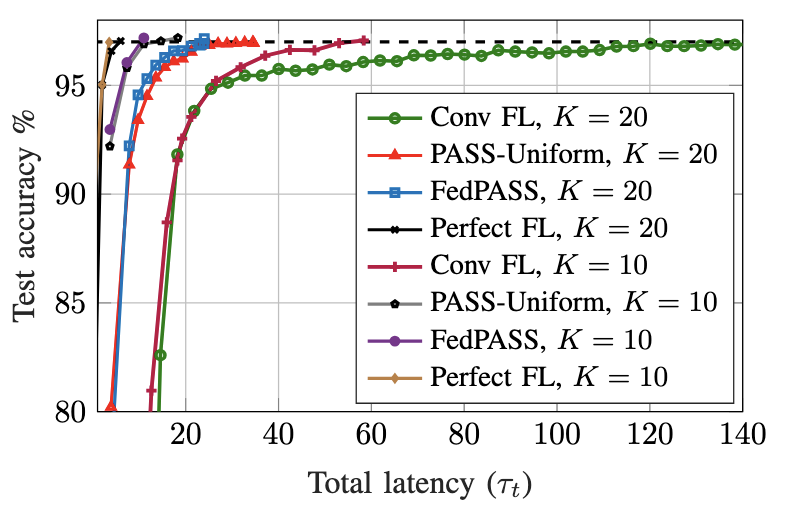}
	\caption{Accuracy vs total latency for different values of $K$.}
	\label{fig:Fig2}
\end{figure}
Fig.~\ref{fig:Fig4} illustrates the impact of the transmit-power budget on the test accuracy versus total latency for the considered FL schemes. The results clearly show that the proposed FedPASS approach maintains a substantial advantage across both low- and high-power regimes. When the transmit power is small (e.g., 
$P=0.05\, \text{W}$), the performance gap becomes particularly pronounced. In this regime, the conventional fixed-antenna baseline suffers from severe uplink bottlenecks due to limited channel quality, resulting in long transmission delays (imposed by the stragglers in the network) and degraded model aggregation. Similarly, the PASS-Uniform scheme cannot fully overcome these channel impairments, as its antenna placement is not tailored to individual device locations. In contrast, the FedPASS design adaptively positions the pinching antennas to strengthen the LoS component for each scheduled user. This targeted antenna placement significantly mitigates the effect of low transmit power by enhancing effective channel gains, thereby reducing per-round latency and enabling more reliable model uploads. As a result, PASS achieves noticeably faster convergence and higher accuracy. When the transmit power increases (e.g., $P=0.5\, \text{W}$), all schemes naturally benefit from improved SNR. However, the FedPASS design continues to perform superiorly, demonstrating that even when power is abundant, optimized antenna placement still accelerates convergence due to strong uplink channels.

\begin{figure}[t!]
	\centering
    \includegraphics[height=0.30\textwidth]{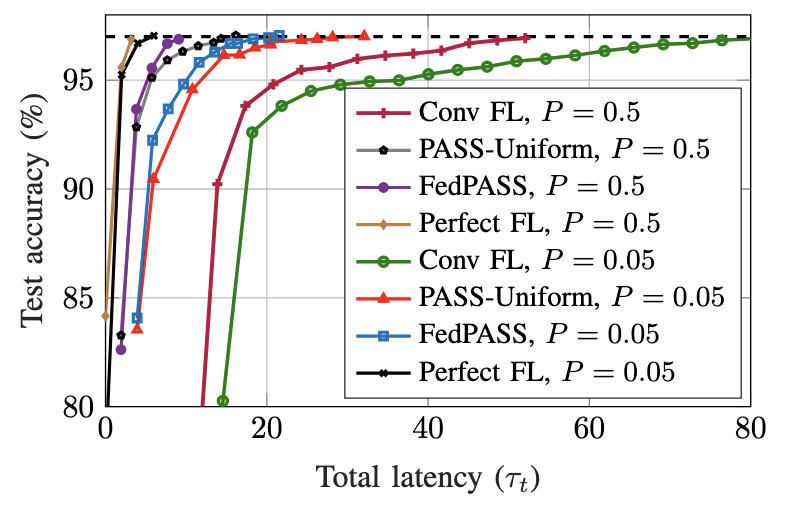}
	\caption{Accuracy vs total latency for various values of transmit power.}
	\label{fig:Fig4}
\end{figure}

\textit{{3) Latency-Learning Tradeoff}}:
Fig.~\ref{fig:Fig8} shows the Pareto-front that describes tradeoff between the total latency and the learning-performance metric for different transmit-power budgets.
The shape of the Pareto-front follows from the fact that both the objectives are set to be minimized: as the optimized total latency increases $F_{\rm learn}$ decreases, and vice versa. Note that smaller values of $F_{\rm learn}$ correspond to scheduling more devices per round, which improves learning quality but increases communication load and round duration. Conversely, relaxing the learning objective, i.e., larger $F_{\rm learn}$ allows the system to schedule fewer devices, reducing communication time and yielding lower latency. The comparison among different transmit-power budgets further reveals that higher $P$ shifts the entire Pareto front downward: with more available power, devices achieve higher uplink data rates, reducing communication time and enabling more devices to satisfy the energy constraint. As a result, higher-power configurations can operate at lower latency for the same learning level, or achieve better learning performance for the same latency budget. Lower transmit powers, by contrast, restrict feasible scheduling and yield significantly slower rounds, pushing the Pareto-front upward.

\begin{figure}[t!]
	\centering
    \includegraphics[height=0.30\textwidth]{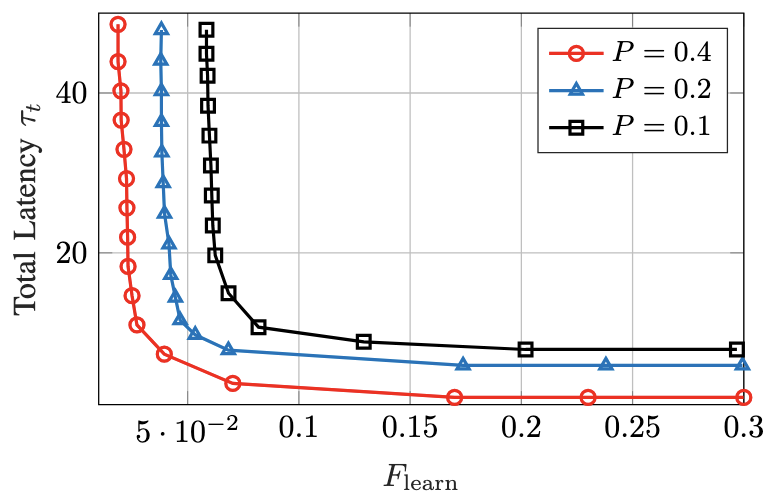}
	\caption{Trade-off between latency and learning quality under different power
		budget.}
	\label{fig:Fig8}
\end{figure}

\textit{{4) Impact of PASS Size}}: Fig.~\ref{fig:Fig5} depicts the total latency as a function of the number of pinching antennas $N$. As expected, the latency of the conventional fixed-antenna baseline remains constant since it does not depend on $N$. For both PASS-based schemes, however, the latency decreases as $N$ increases, following the fact that a larger number of antennas provides more spatial flexibility to place radiation points closer to the users. We further note that, the FedPASS scheme achieves the lowest latency across all values of $N$. Nevertheless, the latency reduction saturates beyond a certain threshold. This saturation occurs because the waveguide has a minimum required spacing between adjacent pinching antennas, imposed to avoid electromagnetic coupling and maintain independent radiation behavior. 
\begin{figure}[t!]
	\centering
    \includegraphics[height=0.30\textwidth]{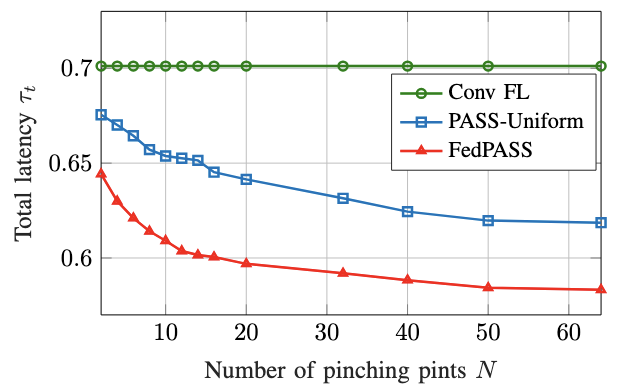}
	\caption{Total latency vs the number of pinching antenna}
	\label{fig:Fig5}
\end{figure}
 \begin{figure}[t!]
 	\centering
    \includegraphics[height=0.30\textwidth]{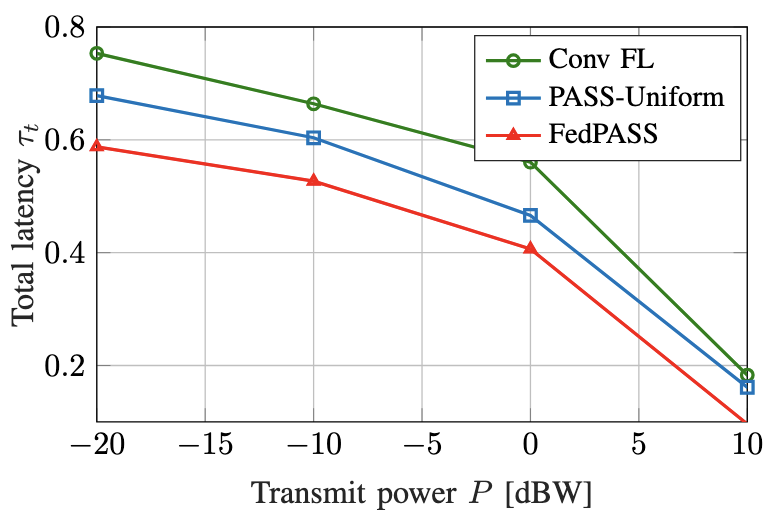}
 	\caption{Total latency vs the transmit power}
 	\label{fig:Fig6}
 \end{figure}
\par \textit{{5) Impact of Transmit Power}}: Fig.~\ref{fig:Fig6} illustrates the impact of the transmit-power budget $P$ on the total latency. As shown, increasing the transmit power leads to a reduction in latency for all wireless schemes. This occurs because higher power improves the uplink SNR, resulting in higher achievable data rates and shorter model-upload times per device.
Among the approaches, the conventional fixed-antenna baseline exhibits the highest latency, since it cannot enhance the LoS component or compensate for users with inherently weak channels. The PASS-Uniform scheme moderately improves latency by enabling radiation points along the waveguide, but its non-optimized antenna placement limits the achievable gains. In contrast, FedPASS achieves the lowest latency across all power levels. By adaptively positioning the pinching antennas, FedPASS strengthens uplink channels even for users located in challenging positions, enabling much faster transmissions. 
\begin{figure}[t!]
	\centering
    \includegraphics[height=0.30\textwidth]{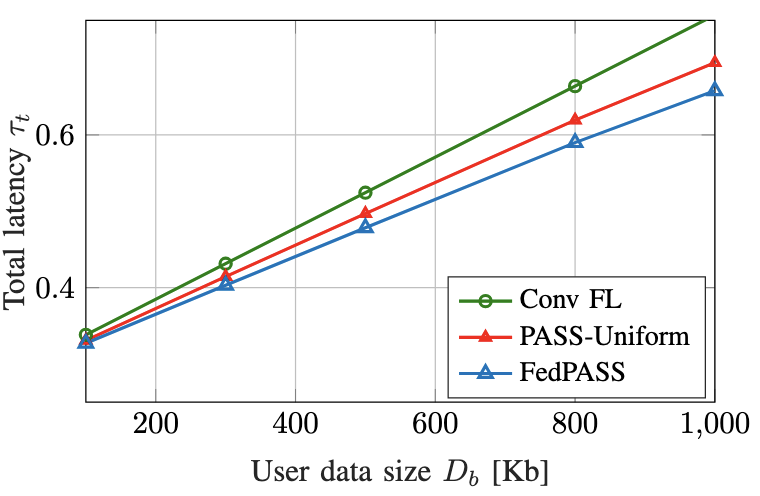}
	\caption{ Total latency vs the user data size}
	\label{fig:Fig7}
\end{figure}
\par \textit{{6) Impact of Local Data-Size}}: Fig.~\ref{fig:Fig7} illustrates how the total latency varies with the data size $D_b$ that each device must upload during the uplink phase. As expected, increasing $D_b$ leads to a growth in total latency across all schemes. Since each device must transmit a larger number of bits, its individual uplink time increases accordingly. Noting that uploading larger updates also consumes more energy, fewer devices are able to meet the energy constraint in each round. As a result, fewer users are scheduled and the system needs more time to complete each global iteration. These two effects together cause the total latency to grow as $D_b$ becomes larger. Similar to earlier experiments, FedPASS consistently achieves the lowest latency among the approaches due to its ability to strengthen uplink channels via optimized pinching-antenna placement, partially mitigating the communication burden imposed by larger model sizes.

\section{Conclusion}
This paper introduces FedPASS, a framework that leverages PASS to accelerate the training process of wireless FL. We formulated a bi-objective mixed-integer optimization problem that jointly minimizes training latency and an upper bound on the FL optimality gap. A two-tier iterative algorithm was developed to solve the resulting problem through coordinated updates of scheduling, power control, and PA placement. Numerical results on the MNIST and CIFAR-10 datasets demonstrate that FedPASS reduces the total training latency by up to $6.4\times$ compared to benchmark schemes, while achieving learning accuracy close to that of idealized baselines. 
Future work may explore asynchronous FL, more advanced PA activation strategies, and large-scale distributed optimization.

\appendices
\section{}
Given (A.2), by $L$-smoothness we have
\[
F(\boldsymbol{\omega}_t) \leq F(\boldsymbol{\omega}_{t-1}) + \langle \nabla F(\boldsymbol{\omega}_{t-1}), \boldsymbol{\omega}_t - \boldsymbol{\omega}_{t-1} \rangle + \frac{L}{2} \|\boldsymbol{\omega}_t - \boldsymbol{\omega}_{t-1}\|_2^2.
\]
After some calculation, we can write the global update as $\boldsymbol{\omega}_t
=\boldsymbol{\omega}_{t-1}-\frac{\vartheta}{L}\,\mathbf{g}_{t-1}$, where
\begin{align*}
    \mathbf{g}_{t-1}
=\frac{1}{D(\mathbf{s}^t)}\sum_{k=1}^K s_k^t |\mathcal{D}_k|
\left(\frac{1}{\vartheta}\sum_{j=0}^{\vartheta-1}\nabla F_k(\boldsymbol{\omega}_k^{(j)})\right).
\end{align*}
Hence, we obtain
\begin{align}\label{eq:feq}
    F(\boldsymbol{\omega}_t)
\le
F(\boldsymbol{\omega}_{t-1})
-\frac{\vartheta}{L}\big\langle \nabla F(\boldsymbol{\omega}_{t-1}),\,\mathbf{g}_{t-1}\big\rangle
+\frac{\vartheta^2}{2L}\big\|\mathbf{g}_{t-1}\big\|_2^2.
\end{align}
Let \(\mathbf{g}_{t-1} = \nabla F(\boldsymbol{\omega}_{t-1}) - \mathbf{e}_{t-1} + \mathbf{d}_{t-1}\), where $ \mathbf{e}_{t-1}$ and $\mathbf{d}_{t-1}$ are gradient-sampling error and the local-drift term, respectively. $ \mathbf{e}_{t-1}$ is defined in \eqref{eq:e} and $\mathbf{d}_{t-1}$ can be written as
\begin{align}\label{eq:d}
    \mathbf{d}_{t-1}
=\frac{1}{D(\mathbf{s}^t)}\sum_{k=1}^K s_k^t |\mathcal{D}_k|
\left(\frac{1}{\vartheta}\sum_{j=0}^{\vartheta-1}
\big[\nabla F_k(\boldsymbol{\omega}_k^{(j)})-\nabla F_k(\boldsymbol{\omega}_{t-1})\big]\right).
\end{align}
Applying inequality
$-\langle a,b\rangle \ge -\tfrac{1}{2\gamma}\|a\|^2 - \tfrac{\gamma}{2}\|b\|^2$ \cite{lang2012algebra}
with $\gamma=2$ to both cross terms, we can bound inner product as 
\begin{align*}
    \big\langle \nabla F(\boldsymbol{\omega}_{t-1}),\,\mathbf{g}_{t-1}\big\rangle
\ge
\frac{1}{2}\big\|\nabla F(\boldsymbol{\omega}_{t-1})\big\|_2^2
-\frac{1}{2}\big\|\mathbf{e}_{t-1}\big\|_2^2
-\frac{1}{2}\big\|\mathbf{d}_{t-1}\big\|_2^2.
\end{align*}
Substituting the above equation into \eqref{eq:feq} yields
\begin{align}\label{eq:F22}
    F(\boldsymbol{\omega}_t)
\le
F(\boldsymbol{\omega}_{t-1})
&-\frac{\vartheta}{2L}\big\|\nabla F(\boldsymbol{\omega}_{t-1})\big\|_2^2
+\frac{\vartheta}{2L}\big\|\mathbf{e}_{t-1}\big\|_2^2 \nonumber \\
&+\frac{\vartheta}{2L}\big\|\mathbf{d}_{t-1}\big\|_2^2.
\end{align}
\par By Assumption~A.4 and $L$-smoothness, we have
\begin{equation}
\big\|\nabla F_k(\boldsymbol{\omega}_k^{(j)})-\nabla F_k(\boldsymbol{\omega}_{t-1})\big\|_2
\le
L\,\big\|\boldsymbol{\omega}_k^{(j)}-\boldsymbol{\omega}_{t-1}\big\|_2.
\end{equation}
Using \eqref{eq:d}, and Jensen/Cauchy–Schwarz on the weighted sum over scheduled devices, and summing over the $\vartheta$ local steps yields an extra factor of $\vartheta^2$. Hence we obtain the squared-norm bound
\begin{align}\label{eq:d3}
\big\|\mathbf{d}_{t-1}\big\|_2^2
\le
\frac{\vartheta^2\,\varepsilon^2}{|\mathcal{D}|\,D(\mathbf{s}^t)}.
\end{align}
Finally, substituting \eqref{eq:d3} into \eqref{eq:F22} completes the proof.
\section{}

Applying Lemma~\ref{lem:1} and invoking the Polyak–Łojasiewicz inequality in A.3,
the one-step descent yields
\begin{align*}
O_t \le \left(1-\frac{\delta\vartheta}{L}\right) O_{t-1}
+\frac{\vartheta}{2L}\|\mathbf{e}_{t-1}\|_2^2
+\frac{\vartheta^3\varepsilon^2}{2L|\mathcal{D}|D(\mathbf{s}^t)}.
\end{align*}
The last term represents the contribution of the local-drift error.
Since $D(\mathbf{s}^t)\le|\mathcal{D}|$, we upper-bound it by 
$\frac{\vartheta^3\varepsilon^2}{2L|\mathcal{D}|^2}$, which is included in $A_t$ as a drift component. Using the boundedness of sample gradients and unrolling the recursion over $T$ rounds gives the stated upper bound.

\bibliographystyle{IEEEtran}
\bibliography{icme2022template}
\end{document}